\begin{document}

\title{Scope Logic: Extending Hoare Logic for Pointer Program Verification
\thanks{This paper is supported by the Chinese HighTech project, grant no. 2009AA01Z148; and the HGJ project, grant no. 2009ZX01036-001-001}}
\author{ZHAO Jianhua, LI Xuandong}

\institute{State Key Laboratory of Novel Software Technology\\
    Dept. of Computer Sci. and Tech. Nanjing University\\
    Nanjing, Jiangsu, P.R.China 210093\\
    zhaojh@nju.edu.cn}
\maketitle

\newcommand{\seml}{[\![}
\newcommand{\semr}{]\!]}

\begin{abstract}
This paper presents an extension to Hoare logic for pointer program verification.
First, the Logic for Partial Function (LPF) used by VDM is extended to specify memory access using pointers and memory layout of composite types.
Then, the concepts of data-retrieve functions ( DRF ) and memory-scope functions (MSF) are introduced in this paper. People can define DRFs to retrieve abstract values from interconnected concrete  data objects. The definition of the corresponding MSF of a DRF can be derived syntactically  from the definition of the DRF. This MSF computes the set of memory units accessed when the DRF retrieves an abstract value. This memory unit set is called the memory scope of the abstract value. Finally, the proof rule of assignment statements in Hoare's logic is modified to deal with pointers. The basic idea is that a virtual value keeps unmodified as long as no memory unit in its scope is over-written. Another proof rule is added for memory allocation statements. The consequence rule and the rules for control-flow statements are slightly modified. They are essentially same as their original version in Hoare logic.

An example is presented to show the efficacy of this logic. We also give some heuristics on how to verify pointer programs.
\end{abstract}

\section{Introduction}
To reasoning the correctness of programs, C.A.R. Hoare presented an axiomatic system for specifying and verifying programs\cite{HOARE-1}\cite{HOARE-2}.
However, this logic can not deal with pointer programs because of pointer alias, i.e. many pointers may refer to the same location.
A few extensions to Hoare logic have been made to deal with pointers or shared mutable data structures \cite{RODNEY}\cite{STEPHEN}\cite{JOSEPH}.
Among them, separation logic \cite{SEPLOG} is one of the most successful extensions.
That logic uses a memory model which consists of two parts: the stack and the heap. Pointers can only refer to data objects in the heap. Separation logic extends
the predicate calculus with the separation operator, which can separate the heap into different disjoint parts.
Then the Hoare logic is extended with a set of proof rules for heap lookup, heap mutation and variable assignment. Though a few programs have been used to demonstrate the
potential of local reasoning for scalability\cite{SEPEXAMPLE}, verifying programs using separation logic is still very difficult.

This paper presents an extension to Hoare logic for verification of pointer programs. This logic uses an extension of the Logic for Partial Functions (LPF) \cite{LPF} to describe pre- and post-conditions of code fragments. Three type constructors are introduced to construct composite types and pointer types used in programs. Several kinds of function symbols associated with these types, together with a set of proof rules, are introduced to model and specify the memory layout/access in pointer programs.

In this logic, people can define recursive functions to retrieve abstract values from interconnected concrete data objects. These functions are called data-retrieve functions (DRFs). DRFs are recursively defined based on basic function symbols and the memory access/layout function symbols.
For each DRF $f$, there is a memory-scope function $\mathfrak{M}(f)$ of which the definition can be constructed syntactically from the definition of $f$. If an application of $f$ results in an abstract value, then an application of $\mathfrak{M}(f)$ to same arguments results in the set of memory units accessed during the application of $f$. During program executions, the application of $f$ to same arguments results in same abstract value as long as no memory unit in this set is modified.

In this logic, program specifications are of the form $\mathbb{P}\vdash p\{c\}r$, where $\mathbb{P}$ is a set of LPF formulae (usually a set of function definitions), $c$ is a fragment of code, and $q,r$ are the pre-condition and post-condition respectively. Such a specification means that if all the formulae in $\mathbb{P}$ hold for arbitrary program states, and $c$ starts its execution from a program state satisfying $q$; then the state must satisfy $r$ when $c$ stops.

This paper is organized as follows. An extension to LPF is presented in Section~\ref{SEC-EXT-LPF}. To model memory access and layout in pointer programs, several kinds of new function symbols and constants are introduced into LPF. A set of proof rules are introduced to specify these function symbols and constants. In Section~\ref{SUBSEC-SCOPE}, the concept `memory scope forms' of terms and `memory scope functions' (MSFs) are introduced. A proof rule is introduced to specify how definitions of MSFs can be constructed. A property about memory scope forms is also given in this section. The syntax of a small program language is given in Section~\ref{SEC-PROG-SYNTAX}. The semantic of this program language is also briefly described in this section. The syntax and meaning of program specifications are given in Section~\ref{SEC-SYN-SPEC}. The extension to Hoare logic is presented in Section~\ref{SEC-AXIOM}.
The proof rule for assignment statements is modified to dealing with the pointer alias problem. Another proof rule is introduced for memory allocation statements.
Section~\ref{SEC-PROOF} presents a formal verification of the  running example in this paper.  Section~\ref{SEC-HEU} gives some heuristics on program verifications using our logic.  Section~\ref{SEC-CONCLUSION} concludes this paper.

In Appendix~\ref{APP-BST-ADDNODE}, we verify another program which inserts a new node to a binary search tree. In Appendix~\ref{APP-ABS}, we use a simplified version of the Schorre-Waite algorithm
 to show that our logic can help people think about program verification in different abstract levels.

\subsection{Preliminary of the logic for partial functions}
The logic for partial functions (LPF) used in Vienna Development Method (VDM) can reason about undefinedness, (abstract) types, and recursive partial function definitions.
The syntax of LPF terms and formulae is briefly described here.  A term of LPF can be one of the following forms:
\begin{enumerate}
\item a variable symbol;
\item $f(e_1,\dots, e_n)$ if $f$ is a function symbol, $arity(f)=n$ and $e_1,\dots, e_n$ are terms,
\item $p\,?\,e_1:e_2$, where $p$ is a formula.
\end{enumerate}
A formula of LPF can be one of the following forms:
\begin{enumerate}
\item a boolean-typed term,
\item $\circledast$; ($\circledast$ denotes the neither-true-nor-false value. It is originally represented by the symbol $\ast$ in LPF papers, but $\ast$ is used to denote the memory access function in this paper.)
\item $P(e_1,\dots, e_n)$, if $P$ is a predicate symbol and $arity(P)=n$, and $e_1,\dots,e_n$ are terms. In this paper, we view a predicate symbol as a \textbf{boolean}-typed function symbol.
\item $e_1=e_2$,  where $e_1,e_2$ are terms,
\item $e:t$, where $e$ is a term and $t$ is a type symbol.
\item $\Delta A$, $\neg A$, $A_1\land  A_2$ are formulae if $A,A_1,A_2$ are formulae.
\item $\forall x:t\cdot A$, where $x$ is a variable, $t$ is a type symbol, and $A$ is a formula.
\item $f(x_1:T_1,\dots, x_n:T_n)\triangleq e$, where $e$ is a term,  and all the free variables in $t$ are in the set $\{x_1,\dots,x_n\}$.
\end{enumerate}
For the proof rules, semantics and other detail information of LPF, we refer readers to \cite{LPF}.

The LPF formulae used in our logic have a constraint: the logical connectives, $\circledast$, $\Delta$ and quantifiers can not occur in a term. Specifically, in a conditional form $p?e_1:e_2$, $p$ contains no logical connective and quantifier. However, we can use some operators like \textbf{cand}, \textbf{cor}, $\dots$, in terms. These operators can be defined using conditional forms. This constraint makes it possible to define the memory scope form of terms.

\section{The extension of the logic for partial functions}\label{SEC-EXT-LPF}
 In this paper, LPF is extended to deal with issues about memory access/layout, composite and pointer program types, data-retrieve functions and memory-scope functions. Now we first extend LPF with program types and associated function symbols.

\subsection{Program types and associated function symbols}\label{SUB-SEC-NEWFUNC}
In LPF, a type can be either a basic type such as $\mathbf{integer}$ and $\mathbf{boolean}$, or a type constructed using type constructors such as
\textbf{SetOf}, \textbf{SeqOf} and \textbf{Map}. However, the abstract types constructed using these type constructors can not be used directly in imperative programs.
To deal with types appeared in programs, we introduce three new type constructors into LPF: pointer (\textbf{P}), array (\textbf{ARR}), and record (\textbf{REC}). We call the types that can appear in programs as \emph{P-types}.
\begin{enumerate}
\item $\mathbf{integer}$ and $\mathbf{boolean}$ are P-types;
\item Let $t,t_1,\dots, t_k$ be P-types, $n_1,n_2,\dots,n_k$ are $k$ different names, $c$ is an positive integer constant.
    $\textbf{P}(t)$, $\textbf{ARR}(t, c)$, and $\textbf{REC}((n_1,t_1)\times \dots\times (n_k, t_k))$ are also P-types.
\end{enumerate}
We allow a record type $t$ has one or more fields with type $\textbf{P}(t)$ such that we can deal with recursive data types in our program language.
We use $\textbf{Ptr}$ as the super type of all pointer types $\textbf{P}(t)$, where $t$ is a P-type. The abstract type constructors $\textbf{Map}, \textbf{SetOf}, \textbf{SeqOf}$ can not be applied to composite program types. However, these type constructors can be applied to pointer types to form new abstract types. That is, we can get an abstract $\textbf{SetOf}(\textbf{P}(t))$ for some P-type $t$, but can not get an abstract type $\textbf{SetOf}(\textbf{Rec}((n_1,t_1)\times \dots\times (n_k, t_k))$.

The following constant and function symbols associated with P-types are introduced.
\begin{enumerate}
\item A program can declare a finite set of program variables with P-types. For each program variable $v$ declared with P-type $t$, $\& v$ is a constant with type $\textbf{P}(t)$.
\item For each pointer type $t$, there is a $t$-typed constant $\textbf{nil}_t$. The type subscript $t$ can be omitted if there is no ambiguity caused.
\item  A partial function $\ast:\textbf{Ptr}\rightarrow \textbf{Ptr}\cup \textbf{integer}\cup \textbf{boolean}$. We write an application of $\ast$ to $e$ as $\ast e$.
For a non-nil pointer $r$ with type $\textbf{P}(t)$, where $t$ is \textbf{integer}, \textbf{boolean} or a pointer type, $\ast r$ is a $t$-typed value. An application of this function symbol models a memory unit access.
\item For each array type $t=\textbf{ARR}(t',c)$, there is a partial function $\&[]_t:\textbf{P}(t)\times \textbf{integer}\rightarrow \textbf{P}(t')$. We write an application of such function as $\&e[i]_t$ instead of $\&[]_t(e,i)$. The type subscripts can be omitted if there is no ambiguity caused. These function symbols model the memory layout of array types. Intuitively speaking, if $e$ is a non-nil reference to a $t$-typed data object, $\&e[i]$ is the reference to the $i$th element. $\&e[i]$ is defined if and only if $e\neq \textbf{nil}$ and $0\le i<c$.
\item For each record type $t=\textbf{REC}((n_1,t_1)\times \dots\times (n_k, t_k))$ and a name $n_i$ $(1\le i\le k)$, we have a partial function $\&\!\!\rightarrow_{t}\!\!n_i:\textbf{P}(t)\rightarrow \textbf{P}(t_i)$. It is only undefined on the constant $\textbf{nil}_t$. We write an application of this function symbol to $e$ as $\&e\rightarrow_t\!n_i$. The type subscript $t$ can be omitted if there is no ambiguity caused. These functions model memory layout of record types. Intuitively speaking, if $e$ is a non-nil reference to a record-typed data object, $\&e\rightarrow_t\!n_i$ is the reference to the field $n_i$.
\end{enumerate}

The above function (and constant) symbols can be used in both programs and specifications. For conciseness, we use the following
abbreviations.
\begin{enumerate}
\item Let $v$ be a program variable declared with type \textbf{integer}, \textbf{boolean} or a pointer type, $v$ is an abbreviation for $\ast(\&v)$.
\item For a program variable $v$ declared with an array type $\textbf{ARR}(t,c)$, and $t$ is \textbf{integer}, \textbf{boolean}, or a pointer type, we use $v[e]$ as an abbreviation for $\ast(\&(\&v)[e])$.
\item If $e$ is of type $\textbf{P}(t)$, $t$ is a record type of which $n$ is a field name, and the field type is \textbf{integer}, \textbf{boolean} or a pointer type, we can use $e\rightarrow n$ as an abbreviation for  $\ast(\& e\rightarrow n)$.

\item Let $v$ be a program variable declared with a record type of which $n$ is a field name, the field type is \textbf{integer}, \textbf{boolean} or a pointer type, we can use $v.n$ as an abbreviation for $\ast (\&(\&v)\rightarrow n)$.
\end{enumerate}

\subsection{The proof rules about memory access and layout}\label{NEW-SYMB}
In this subsection, we present some proof rules to specify memory unit access and memory layout of composite types.
We define an auxiliary function $\texttt{Block}:\textbf{Ptr}\rightarrow \textbf{SetOf}(\textbf{Ptr})$ to denote the set of memory units in a memory block. The definition of $\texttt{Block}$ is as follows.

\noindent
$\texttt{Block}(r)= \emptyset$ if $r=\textbf{nil}$. Otherwise $\texttt{Block}(r)=$
$$\left\{
\begin{array}{rcl}
\{r\}  &\ \   & \mbox{if $\ast r$ is of type $\textbf{integer}$, $\textbf{boolean}$ or $\textbf{Ptr}$}\\
\bigcup_{\mbox{\tiny $n$: field name of $t$}}\texttt{Block}(\&r\rightarrow n)&\ \
&\mbox{if $r:\textbf{P}(t)$ and $t$ is a record type}\\
\bigcup_{i=0}^{c-1} \texttt{Block}(\&r[i]) &\ \ &\mbox{if $r:\textbf{P}(\textbf{ARR}(t',c))$ for some $t'$}
\end{array}
\right. $$
Intuitively speaking, $\texttt{Block}(r)$ is the set of memory units in the memory block referred by $r$.

The rule MEM-ACC says that if $r$ denotes a non-nil pointer referring to a memory unit storing basic type values or  pointer values, $\ast r$ denotes a basic type value or a pointer value respectively.
$$\framebox{\ \ \ MEM-ACC\ \ \ }\frac{\ \ \ \ \ r:\textbf{P}(t)\ \ \  r\neq \textbf{nil}\ }{\ast r : t}
\mbox{\small\ \ $t$ is $\textbf{integer}$, $\textbf{boolean}$, or $\textbf{P}(t')$ for some $t'$}$$\\

The rule MEM-BLK specifies how memory blocks are allocated. Given two arbitrary different memory blocks, they are either disjoint with each other, or one is contained by the other.
$$\framebox{MEM-BLK}\frac{p:\textbf{Ptr}\ \ \ q:\textbf{Ptr}\ \ \ p\neq q}
{\begin{array}{c}\texttt{Block}(p)\cap
\texttt{Block}(q)=\emptyset\lor\\
\texttt{Block}(p)\subset \texttt{Block}(q) \lor
\texttt{Block}(q)\subset \texttt{Block}(p)\end{array}}$$\\

The following two rules specify how the memory blocks are allocated for declared program variables. The rule PVAR-1 says that
for each program variable, a memory block with corresponding type is allocated. Furthermore, this block is not a sub-block of any other memory block.
The rule PVAR-2 says that each program variable is allocated a separate memory block.

$$
\framebox{PVAR-1}\frac{\ \ \ \ }
{\begin{array}{c}\&v:\textbf{P}(t)\land\&v\neq \textbf{nil}\land\\
\forall x:\textbf{Ptr}\cdot \texttt{Block}(\&v)\not\subset \texttt{Block}(x)
\end{array}}
\mbox{\small\ $v$ is a program declared with type $t$.}
$$\\
$$\framebox{PVAR-2}\frac{ \ \ \  }{\ \ \ \&v_1\neq \&v_2\ \ \ \  }\mbox{\small\ \ $v_1,v_2$ are two different program variables}$$\\

The following two rules specify the memory layout for record-typed memory blocks.
The rule RECORD-1 says that a record-typed memory block is allocated as a whole, i.e. when a record-typed memory block is allocated, all the memory blocks for its fields are also allocated.
The rule RECORD-2 says that the memory blocks allocated for the fields are disjoint with each other.
$$\framebox{RECORD-1}\frac{\ \ \ r:\textbf{P}(\textbf{REC}(\dots\times(n,t)\times\dots))\ \ \ \ r\neq \textbf{nil}}
{\ \ \ \ (\&r\rightarrow n:\textbf{P}(t))\land(\&r\rightarrow n\neq\textbf{nil})}$$\\

$$\framebox{RECORD-2}\frac{\ \ \ r:\textbf{P}(\textbf{REC}(\dots\times(n_1,t_1)\times\dots\times(n_2,t_2)\times\dots))\ \ \ \ r\neq \textbf{nil}}{\ \ \ \ \ \ \ \texttt{Block}(\&r\rightarrow n_1)\cap \texttt{Block}(\&r\rightarrow
        n_2)=\emptyset\ \ \ }$$\\

The following two rules specify the memory layout for array-typed memory blocks.
The rule ARR-1 says that an array-typed memory block is allocated as a whole, i.e. when an array-typed memory block is allocated, all of the memory blocks for its elements are allocated. The rule ARR-2 says that the memory blocks allocated for different elements are disjoint with each other.
$$\framebox{ARR-1}\frac{\ \ \ r:\textbf{P}(\textbf{ARR}(t,c))\ \ \ r\neq \textbf{nil}\ \ \  0\le i< c\ \ \ }{(\&r[i]:\textbf{P}(t))\land(\&r[i]\neq\textbf{nil})}$$\\

$$\framebox{{ARR}-2}\frac{\ \ \ r:\textbf{P}(\textbf{ARR}(t,c))\ \ \ r\neq \textbf{nil}\ \ \  0\le i< c\ \ \ 0\le j<c\ \ \ i\neq j}{\texttt{Block}(\&r[i])\cap \texttt{Block}(\&r[j])=\emptyset}$$\\

\subsection{The interpretation of P-types and new function symbols}
Please be noticed that the types of the constant symbols ($\&v, \textbf{nil}_t$ ) introduced in this section are \textbf{integer}, \textbf{boolean}, or pointer types. The argument types and result types of the function symbols introduced in this section are also \textbf{integer}, \textbf{boolean}, and pointer values.
So the terms in our logic do not denote array or record P-type values. Thus structures for our logic does not have to interpret record and array types.

For each P-type $t$, $(\textbf{P}(t))^A$ is a countable infinite set in the universal domain $\mathcal{U}^A$ satisfying that $(\textbf{nil}_{\textbf{P}(t)})^A\in (\textbf{P}(t))^A$. Furthermore, it is required that for different P-types $t_1$ and $t_2$, $( \textbf{P}(t_1))^A$ and $( \textbf{P}(t_2))^A$ are disjoint. $\textbf{Ptr}^A$ is the union of all such sets.

The function symbols $\&\!\rightarrow\!n$ and $\&[\,]$ model the memory layout of records and arrays respectively. As we do not go into details about memory layout of composite types, we just requires that all the proof rules in the previous subsection are satisfied by the interpretation of these function symbols.

The function symbol $\ast$ models program states. Its interpretation must satisfy that
$\ast^A(x) \in t^A$ if $x\in (\textbf{P}(t))^A$ and $x\neq (\textbf{nil}_{\textbf{P}(t)})^A$, where $t$ is \textbf{integer}, \textbf{boolean}, or $\textbf{P}(t')$ for some $t'$; $\ast^A(x) = \bot$ otherwise.

\section{Memory scope functions}\label{SUBSEC-SCOPE}
In LPF, a formula $f(x_1,\dots, x_n)\triangleq e$ defines a function denoted by $f$.
People can define data-retrieve functions using such formulae. In definitions for DRFs, we require that for any conditional sub-term
$e_0?e_1:e_2$ of $e$, none of the function symbols occurred in $e_0$ is defined (directly or indirectly) based on $f$. So for each DRF definition
$f(x_1,\dots, x_n)\triangleq e$, $e$ is continuous in $f$, thus we can use the proof rule \texttt{Func-Ind} in \cite{LPF} to prove properties about DRFs.

Given an LPF term $e$, the memory scope form  of $e$, denoted as $\mathfrak{M}(e)$,  is defined as follow.
\begin{enumerate}
\item If $e$ is a variable, $\mathfrak{M}(e)$ is $\emptyset$.
\item If $e$ is of the form $f(e_1,\dots,e_n)$, $\mathfrak{M}(e)$ is $\mathfrak{M}(e_1)\cup\dots\cup \mathfrak{M}(e_n)\cup \mathfrak{M}(f)(e_1,\dots,e_n)$, where $\mathfrak{M}(f)$ represents the MSF symbol of $f$, which is defined as follow
    \begin{itemize}
    \item If $f$ is a function symbol associated with basic types or abstract types (for example, $+,-,\times, /, >, <, \in, \subseteq \dots$), $\mathfrak{M}(f)$ is defined as the constant $\emptyset$.
    \item If $f$ is $\&\rightarrow n$, $\&[\,]$, $\&v$ for some program variable, $\textbf{nil}_t$ for some type $t$, $\mathfrak{M}(f)$ is defined as the constant $\emptyset$.
    \item If $f$ is the memory access function $\ast$ introduced in sub-section~\ref{SUB-SEC-NEWFUNC}, $\mathfrak{M}(\ast)$ is defined as $\mathfrak{M}(\ast)(x) \triangleq x$.
    \item For any other function symbols, $\mathfrak{M}(f)$ represents a new function symbol denoting the memory scope function of $f$.
    \end{itemize}
\item If $e$ is of the form $e_0 ? e_1: e_2$,  $\mathfrak{M}(e)$ is $\mathfrak{M}(e_0)\cup(e_0? \mathfrak{M}(e_1) :\mathfrak{M}(e_2))$.
\end{enumerate}

Given a DRF $f$ defined as $f(x_1,\dots, x_n)\triangleq e$, the memory scope function $\mathfrak{M}(f)$ of $f$ is defined as
        $$\mathfrak{M}(f)(x_1,\dots,x_n)=\mathfrak{M}(e)$$
Formally, it is expressed using the following proof rule.

$$\framebox{SCOPE-FUNC}\frac{\ \ \ \ \ \ f(x_1, \dots, x_n)\triangleq e}{\ \ \ \ \ \ \ \mathfrak{M}(f)(x_1, \dots, x_n)\triangleq\mathfrak{M}(e)}\ \ $$
Please be noticed that for any sub-term $e_0?e_1:e_2$ of $\mathfrak{M}(e)$, no function symbol is recursively defined based on $\mathfrak{M}(f)$.
\begin{definition}
 We say a structure $A$ with signature $\Sigma$  \emph{conforms} to a set of function definitions $\mathbb{P}$ iff for each definition $f(x_1,\dots,x_n)\triangleq e$ in $\mathbb{P}$, $\seml f(x_1,\dots,x_n)\triangleq e\semr^A_\alpha$ is $T$, here $\alpha$ is the assignment of $A$.
\end{definition}
The structure $A$ conforms to $\mathbb{P}$ means that $A$ interprets the defined function symbols according to their definition in $\mathbb{P}$.

In our logic, the function symbol $\ast$ is used to model program states. The DRFs used to retrieve abstract values are defined on $\ast$.
One of the basic ideas of our logic is that the abstract values retrieved by these functions keep unchanged if no memory unit in their memory scopes is over-written during a program execution. We have the following lemma and theorem about MSFs and memory scope forms of terms.

\begin{lemma}\label{THEOREM-TERM-EQUIV}
Let $\mathbb{P}$ be a set of recursive function definitions. Let $A$ and $A'$ be two structures. They both conform  to $\mathbb{P}$ and are identical except that they may have different interpretations for $\ast$ and for the function symbols defined in $\mathbb{P}$.
Let $e$ be a term satisfying that all function symbols in $e$ are either defined in $\mathbb{P}$, or associated with basic types, abstract types or P-types.
We have that $\seml e\semr^A_\alpha = \seml e\semr^{A'}_\alpha$ and $\seml \mathfrak{M}(e)\semr ^{A}_\alpha=\seml \mathfrak{M}(e)\semr ^{A'}_\alpha$ if
$\seml \mathfrak{M}(e)\semr^{A}_\alpha\neq \bot$ and $\ast^A(x)=\ast^{A'}(x)$ for all $x\in\seml \mathfrak{M}(e)\semr^{A}_\alpha$.

\end{lemma}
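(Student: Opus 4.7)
The plan is to prove both halves of the conclusion simultaneously by structural induction on $e$, exploiting the fact that at each syntactic constructor, $\mathfrak{M}(e)$ is built as a union that contains the scope of every subterm whose value actually matters. Consequently, the hypothesis ``$\seml\mathfrak{M}(e)\semr^A_\alpha\ne\bot$ and $\ast^A,\ast^{A'}$ agree on it'' propagates automatically to the relevant subterms at every inductive step, which is what makes the induction self-feeding.

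The variable case is immediate since $\mathfrak{M}(e)=\emptyset$ and the value depends only on $\alpha$. For a conditional $e_0?e_1:e_2$, I would first invoke the IH on $e_0$, whose scope is (after strictness of $\cup$) contained in $\seml\mathfrak{M}(e)\semr^A_\alpha$; this yields a common truth value $b$, and the conditional summand of $\mathfrak{M}(e)$ then exposes $\seml\mathfrak{M}(e_b)\semr^A_\alpha$ as the rest of the scope, so a second IH application on the chosen branch closes the case. The possibility $b=\bot$ is ruled out because strict union would force $\seml\mathfrak{M}(e)\semr^A_\alpha=\bot$. For a function application $f(e_1,\dots,e_n)$, the IH applied to each $e_i$ yields common values $v_1,\dots,v_n$. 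If $f$ is associated with basic types, abstract types, or a P-type constructor other than $\ast$, then $f^A=f^{A'}$ by the assumed identity of the structures. If $f$ is $\ast$, then $\mathfrak{M}(\ast)(x)\triangleq x$ places $v_1$ inside $\seml\mathfrak{M}(e)\semr^A_\alpha$, so the hypothesis forces $\ast^A(v_1)=\ast^{A'}(v_1)$.

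The main obstacle is the case in which $f$ is a DRF defined in $\mathbb{P}$, because the body $e_f$ may again contain $f$, so plain structural induction on $e$ does not reduce the problem. Here I would invoke the continuity observation made just after the MSF definition: the body of every DRF is continuous in the defined symbol, so both $f^A$ and $f^{A'}$ are the least fixed points of their respective body-functionals, and LPF's \texttt{Func-Ind} rule applies. Concretely, I would apply \texttt{Func-Ind} to the predicate ``for all $v_1,\dots,v_n$, whenever $\seml\mathfrak{M}(f)(x_1,\dots,x_n)\semr^A_{\alpha[\vec x:=\vec v]}$ is defined and $\ast^A,\ast^{A'}$ agree on every element of it, both $f^A(\vec v)=f^{A'}(\vec v)$ and $\mathfrak{M}(f)^A(\vec v)=\mathfrak{M}(f)^{A'}(\vec v)$.'' The induction step unfolds $f$-applications to $\seml e_f\semr^A_{\alpha[\vec x:=\vec v]}$ and $\seml\mathfrak{M}(e_f)\semr^A_{\alpha[\vec x:=\vec v]}$ using conformance and rule SCOPE-FUNC, and then invokes the outer structural IH on $e_f$; this is legitimate because each recursive occurrence of $f$ inside the step is replaced by an approximant about which the \texttt{Func-Ind} hypothesis already grants the property.

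The trickiest bookkeeping lies in verifying that $\mathfrak{M}$ genuinely captures every memory unit consulted during the computation of $\seml e\semr^A_\alpha$: for conditionals, the ``untaken'' branch must contribute nothing thanks to the inner conditional form in the scope; for DRF applications, the extra summand $\mathfrak{M}(f)(e_1,\dots,e_n)$ is exactly what licenses the use of \texttt{Func-Ind} on the body. Once these are checked, the two halves of the conclusion fall out together, since at every node of the induction the equality of $\seml e\semr$ and that of $\seml\mathfrak{M}(e)\semr$ are established by essentially parallel sub-arguments.
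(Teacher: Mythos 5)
Your proposal is correct and follows essentially the same route as the paper: structural induction over the term constructors (with the scope form's union/conditional structure propagating the hypothesis to subterms), and the recursively defined DRF case handled via the least-fixed-point characterization of the defined symbols. The paper merely unfolds what you delegate to \texttt{Func-Ind}: it builds the approximants $f_0, f_1, \dots$ explicitly, disposes of non-recursively defined symbols by a rank-plus-length induction, and passes to the limit by continuity --- which is the meta-level, cross-structure form of the fixed-point induction you invoke (worth noting, since \texttt{Func-Ind} as an object-level LPF rule does not literally apply to a predicate relating two structures $A$ and $A'$).
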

\begin{proof}

By induction, we first prove that the conclusion holds when $e$ contains no function symbol defined in $\mathbb{P}$.\\
\textbf{BASE}: The conclusion holds if $e$ is a variable or a constant symbol.\\
\textbf{INDUCTION}: Assuming the conclusion holds for all terms shorter than $e$. We prove that $\seml e\semr^A_\alpha = \seml e\semr^{A'}_\alpha$ and $\seml \mathfrak{M}(e)\semr ^{A}_\alpha=\seml \mathfrak{M}(e)\semr ^{A'}_\alpha$ if
$\seml \mathfrak{M}(e)\semr^{A}_\alpha\neq \bot$ and $\ast^A(x)=\ast^{A'}(x)$ for all $x\in\seml \mathfrak{M}(e)\semr^{A}_\alpha$.
\begin{itemize}
\item If $e$ is of the form $f(e_1,\dots, e_n)$, here $f$ is a function symbol other than $\ast$, and $f$ is not defined in $\mathbb{P}$. $\mathfrak{M}(e)$ is $\mathfrak{M}(e_1)\cup\dots\cup\mathfrak{M}(e_n)$ because $\mathfrak{M}(f)$ is $\emptyset$. So $\seml \mathfrak{M}(e_i)\semr^A_\alpha\neq\bot$ and $\ast^{A}(x)=\ast^{A'}(x)$ for all $x\in\seml\mathfrak{M}(e_i)\semr^{A}_\alpha$ for $i=1,\dots,n$.
    According to the inductive assumption, $\seml e_i\semr ^A_\alpha=\seml e_i\semr ^{A'}_\alpha$ and $\seml \mathfrak{M}(e_i)\semr^{A}_\alpha = \seml \mathfrak{M}(e_i)\semr^{A'}_\alpha$. It follows that $\seml e\semr ^A_\alpha=\seml e\semr ^{A'}_\alpha$ and $\seml \mathfrak{M}(e)\semr ^A_\alpha=\seml\mathfrak{M}(e)\semr ^{A'}_\alpha$ because $f^A=f^{A'}$.
\item If $e$ is of the form $\ast e_1$. $\mathfrak{M}(e)$ is defined as $\{e_1\}\cup\mathfrak{M}(e_1)$. So $\seml \mathfrak{M}(e_1)\semr^A_\alpha\neq\bot$ and $\ast^A(x)=\ast^{A'}(x)$ for all $x\in\seml \mathfrak{M}(e_1)\semr^A_\alpha$. From the inductive assumption, we have that $\seml e_1\semr ^A_\alpha =\seml e_1\semr ^{A'}_\alpha$ and $\seml\mathfrak{M}(e_1)\semr ^A_\alpha=\seml\mathfrak{M}(e_1)\semr ^{A'}_\alpha$. Because $\seml e_1\semr ^A_\alpha\in \seml\mathfrak{M}(e)\semr ^A_\alpha$ if $\seml\mathfrak{M}(e)\semr ^A_\alpha$ is not $\bot$, we have $\seml e\semr ^A_\alpha = \ast^A(\seml e_1\semr^A_\alpha) = \ast^{A'}(\seml e_1\semr^{A'}_\alpha) = \seml e\semr ^{A'}_\alpha$ and $\seml\mathfrak{M}(e)\semr ^A_\alpha = \seml\mathfrak{M}(e)\semr ^{A'}_\alpha$.
\item If $e$ is of the form $e_0?e_1:e_2$. $\mathfrak{M}(e)$ is $\mathfrak{M}(e_0)\cup(e_0?\mathfrak{M}(e_1):\mathfrak{M}(e_2))$. From the inductive assumption, we have $\seml e_0\semr ^A_\alpha=\seml e_0\semr ^{A'}_\alpha$ and  $\seml \mathfrak{M}(e_0)\semr ^A_\alpha = \seml \mathfrak{M}(e_0)\semr ^{A'}_\alpha$. So  $\seml e_0\semr ^{A'}_\alpha = T$ iff $\seml e_0\semr ^A_\alpha=T$. When $\seml e_0\semr ^{A'}_\alpha = \seml e_0\semr ^A_\alpha=T$, we have
    $\seml \mathfrak{M}(e)\semr ^A_\alpha = \seml\mathfrak{M}(e_0)\semr ^A_\alpha\cup\seml\mathfrak{M}(e_1)\semr ^A_\alpha$, $\seml\mathfrak{M}(e)\semr ^{A'}_\alpha = \seml\mathfrak{M}(e_0)\semr ^{A'}_\alpha\cup\seml\mathfrak{M}(e_1)\semr ^{A'}_\alpha$, $\seml e\semr ^A_\alpha = \seml e_1\semr ^A_\alpha$ and $\seml e\semr ^{A'}_\alpha = \seml e_1\semr ^{A'}_\alpha$. From the inductive assumption, we have that $\seml e\semr ^A_\alpha=\seml e\semr ^{A'}_\alpha$ and $\seml\mathfrak{M}(e)\semr ^{A}_\alpha=\seml\mathfrak{M}(e)\semr ^{A'}_\alpha$. We can also prove that $\seml e\semr ^A_\alpha=\seml e\semr ^{A'}_\alpha$ and $\seml\mathfrak{M}(e)\semr ^{A}_\alpha=\seml\mathfrak{M}(e)\semr ^{A'}_\alpha$ when $\seml e_0\semr ^A_\alpha$ is $F$ or $N$.
\end{itemize}

Second, we prove that the conclusion holds if no function symbol defined in $\mathbb{P}$ is (directly or indirectly) recursively defined on itself.
We give a rank to each term and each function symbol. The rank of a term $e$ is the highest rank of the function symbols occur in $e$.
The ranks of function symbols associated with basic types and abstract types are $0$. The function symbols $\ast$, $\&\rightarrow n$, $\&[]$ also have rank $0$.
 The rank of a function symbol $f$ defined as $f(x_1,\dots, x_n)\triangleq e_r$ in $\mathbb{P}$ is the rank of $e_r$ plus $1$.  As no function symbol is recursively defined, each function symbol and term has a rank. Now, the conclusion is proved by an induction on the ranks and the lengthes of terms.\\
\textbf{BASE}: According to the conclusion of the first step, this conclusion holds for $0$-rank terms with any length.\\
\textbf{INDUCTION}: Let $e$ be a $k$-rank term. If the conclusion holds for all terms either with a rank less than $k$, and all $k$-rank terms shorter than $e$.
\begin{itemize}
\item If $e$ is of the form $f(e_1,\dots,e_n)$ and $f$ is a function symbol with a rank non-greater than $k$, and defined as $f(x_1,\dots,x_n)\triangleq e_r$. Then the rank of $e_r$ is less than or equal to $k-1$. As all the function-definition formulae in $\mathbb{P}$ are interpreted to $T$, according to the semantic model of function definitions of LPF,  both $\seml f(e_1,\dots,e_n)\semr ^A_{\alpha}$ and $\seml f(e_1,\dots,e_n)\semr ^{A'}_{\alpha}$ are $\bot$ if some of $\seml e_i\semr ^A_\alpha$ is $\bot$. Otherwise,  $\seml f(e_1,\dots,e_n)\semr ^A_{\alpha}$ and  $\seml f(e_1,\dots,e_n)\semr ^{A'}_{\alpha}$ are $\seml e_r\semr ^A_{\alpha'}$ and $\seml e_r\semr ^{A'}_{\alpha'}$ respectively, where $\alpha'=\alpha(x_1\rightarrow \seml e_1\semr ^A_\alpha)\dots(x_n\rightarrow \seml e_n\semr ^A_\alpha)$, i.e. $\alpha'$ is same as $\alpha$ except that $\alpha'$ maps $x_i$ to $\seml e_i\semr^A_\alpha$; $\seml \mathfrak{M}(f)(e_1,\dots,e_n)\semr ^A_\alpha$ and $\seml \mathfrak{M}(f)(e_1,\dots,e_n)\semr ^{A'}_\alpha$ are $\seml \mathfrak{M}(e_r)\semr ^A_{\alpha'}$ and $\seml \mathfrak{M}(e_r)\semr ^{A'}_{\alpha'}$ respectively.
     Because $\seml \mathfrak{M}(f)(e_1,\dots,e_n)\semr ^A_\alpha\subseteq \seml e\semr^A_\alpha$, we have $\ast^A(x)=\ast^A(x)$ for all $x\in\seml \mathfrak{M}(f)(e_1,\dots,e_n)\semr ^A_\alpha=\seml \mathfrak{M}(e_r)\semr ^A_{\alpha'}$. As the rank of $e_r$ is less than or equal to $k-1$, from the inductive assumption, we have $\seml e_r\semr ^A_{\alpha'}=\seml e_r\semr ^{A'}_{\alpha'}$ and $\seml \mathfrak{M}(e_r)\semr ^A_{\alpha'}=\seml \mathfrak{M}(e_r)\semr ^{A'}_{\alpha'}$, i.e. $\seml e\semr ^A_{\alpha}=\seml e\semr ^{A'}_{\alpha}$ and  $\seml \mathfrak{M}(f)(e_1,\dots,e_n)\semr ^A_\alpha=\seml \mathfrak{M}(f)(e_1,\dots,e_n)\semr ^{A'}_\alpha$. So $\seml \mathfrak{M}(e)\semr ^A_{\alpha}=\seml \mathfrak{M}(e)\semr ^{A'}_{\alpha}$ because $\seml\mathfrak{M}( e_i)\semr^A_\alpha=\seml\mathfrak{M}( e_i)\semr^{A'}_\alpha$ and  $\seml \mathfrak{M}(f)(e_1,\dots,e_n)\semr ^A_\alpha=\seml \mathfrak{M}(f)(e_1,\dots,e_n)\semr ^{A'}_\alpha$.
\item If $e$ is a conditional form with a rank $k$, the proof is similar to those of the first step.
\end{itemize}

Now we are about to prove the general case. For each function symbol $f$ recursively defined in $\mathbb{P}$, we introduce infinite number of function symbols $f_0, f_1, \dots$. For the definition of $f$, i.e. $f(x_1,\dots, x_n)\triangleq e$, we introduce a set of definitions $f_i(x_1,\dots, x_n) \triangleq e_i$ for $i=1,2,\dots$, where $e_i$ is derived by replacing each function symbol $g$ recursively defined in $\mathbb{P}$ by $g_{i-1}$ (Here $g$ can be $f$) in $e$. We also introduce a function definition $f_0(x_1,\dots,x_n)=\circledast$ for each $f_0$. Notice that $f_i$s are not recursively defined. Furthermore, $\mathfrak{M}(e_i)$ is same as the term derived by replacing $g$ and $\mathfrak{M}(g)$ respectively by $g_{i-1}$ and $\mathfrak{M}(g_{i-1})$  in $\mathfrak{M}(e)$. Because it is required that for each definition $f(x_1,\dots,x_n)=e$ in $\mathbb{P}$, $e$ is continuous in $f$, we have that $f_i$ is less defined than $f_{i+1}$, i.e. $f_{i+1}(x_1,\dots,x_n)=f_i(x_1,\dots,x_n)$ if $f_i(x_1,\dots, x_n)$ is defined for any $x_1,\dots,x_n$. Because $\mathfrak{M}(e)$ is continuous in $\mathfrak{M}(f)$, we have that $\mathfrak{M}(f_i)^A$ is less defined than $\mathfrak{M}(f_{i+1})$. So $f^A$ is the least upper-bound of the function sequence $f_0^A, f_1^A,\dots$, and $\mathfrak{M}(f)^A$ is the least upper-bound of the function sequence $\mathfrak{M}(f_0)^A, \mathfrak{M}(f_1)^A,\dots$.
Let $e$ be a term containing recursively defined function symbols. If $\seml e\semr ^A_\alpha$ is not $\bot$, there must be a large-enough integer $i$ such that $\seml e_i\semr ^A_\alpha = \seml e\semr ^A_\alpha$ and $\seml \mathfrak{M}(e_i)\semr ^A_\alpha = \seml e\semr ^A_\alpha$, where $e_i$ is derived by replacing each recursively defined function symbol $g$ by $g_{i-1}$. As $e_i$ contains no recursively defined symbols, according to the second conclusion, we have that this lemma holds in general.
\hfill\textbf{QED}\\
\hfill$\square$
\end{proof}

The following theorem~\ref{FORMULA-EQUIV} gives a sufficient condition under which an LPF formula $p$ keeps unchanged before/after some memory units are modified. A term occurs in $p$ is called a top-level one if it is not a sub-term of another term occurs in $p$.

\begin{theorem}\label{FORMULA-EQUIV}
Let $\mathbb{P}$ be a set of recursive function definitions. Let $A$ and $A'$ be two structures. They both conform  to $\mathbb{P}$ and are identical except that they may have different interpretations for $\ast$ and for the function symbols defined in $\mathbb{P}$.
Let $p$ be an LPF formula satisfying that
\begin{itemize}
\item all function symbols in $p$ are either defined in $\mathbb{P}$, or associated with basic types, abstract types, or P-types, and
\item $p$ has no sub-formula of the form $f(x_1,\dots,x_n)\triangleq e_r$.
\end{itemize}
We have that $\seml p\semr ^A_\alpha=\seml p\semr ^{A'}_\alpha$ if $\seml \mathfrak{M}(e)\semr ^{A}_{\alpha'}\neq\bot$ and $\ast^A(x)=\ast^{A'}(x)$ for all $x\in\seml \mathfrak{M}(e)\semr ^{A}_{\alpha'}$ for each top-level term $e$ of $p$, and arbitrary assignment $\alpha'$.
\end{theorem}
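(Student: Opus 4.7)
The plan is to proceed by structural induction on the formula $p$, using Lemma~\ref{THEOREM-TERM-EQUIV} for the base cases and propagating the hypothesis through the formula-building operations in the inductive step. The crucial observation is that the set of top-level terms of a composite formula is obtained in a transparent way from the top-level terms of its constituent parts, so the hypothesis about $\mathfrak{M}(e)$ can be reused at each stage of the induction without modification.

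For the base cases, $p$ is either $\circledast$, a boolean-typed term $e$ (which subsumes applications $P(e_1,\dots,e_n)$ of predicate symbols, since these are viewed as boolean-typed function symbols), an equality $e_1 = e_2$, or a type membership $e : t$. The case $\circledast$ is trivial since it has no top-level terms and both interpretations yield the same neither-true-nor-false value. In each of the remaining cases the top-level terms of $p$ are exactly the maximal constituent terms ($e$ itself in the boolean-term case, $e_1$ and $e_2$ in the equality case, $e$ in the type-membership case). The hypothesis supplies, for each such top-level term $e_0$, that $\seml \mathfrak{M}(e_0)\semr^A_\alpha \neq \bot$ and that $\ast^A$ agrees with $\ast^{A'}$ on this set; Lemma~\ref{THEOREM-TERM-EQUIV} then gives $\seml e_0\semr^A_\alpha = \seml e_0\semr^{A'}_\alpha$. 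Since $A$ and $A'$ agree on equality, on the interpretation of P-types, and on the interpretation of every function/predicate symbol that is neither $\ast$ nor defined in $\mathbb{P}$, the equality of interpretations of $p$ follows.

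For the inductive step, the cases $\neg A$, $\Delta A$, and $A_1 \land A_2$ are routine: the top-level terms of the composite formula are either those of the single sub-formula or the union of those of the two sub-formulae, so the hypothesis transfers to each sub-formula unchanged and the inductive hypothesis applies directly; the logical connectives and $\Delta$ then depend functionally on the values of their arguments. The interesting case is the quantifier $\forall x : t \cdot A$. Here $\seml \forall x : t \cdot A\semr^A_\alpha$ is computed by ranging $v$ over $t^A$ and evaluating $A$ under $\alpha[x \mapsto v]$, and the top-level terms of $A$ may contain $x$ free, so the scope form $\mathfrak{M}(e_0)$ of a top-level term $e_0$ of $A$ must be evaluated at the shifted assignment $\alpha[x \mapsto v]$.

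This is precisely where the clause ``for arbitrary assignment $\alpha'$'' in the hypothesis becomes essential, and where the main delicacy of the argument lies. Because the condition on $\mathfrak{M}(e)$ is assumed to hold under every assignment, it holds in particular under each $\alpha[x \mapsto v]$, which lets us apply the inductive hypothesis to $A$ at that assignment and conclude $\seml A\semr^A_{\alpha[x \mapsto v]} = \seml A\semr^{A'}_{\alpha[x \mapsto v]}$ uniformly in $v$; using that $t^A = t^{A'}$, the equality for $\forall x : t \cdot A$ follows. Without the universal quantification over $\alpha'$ in the statement, this step would collapse, so keeping track of that generality — rather than any intricate calculation — is the key point requiring care.
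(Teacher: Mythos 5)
Your proof follows essentially the same route as the paper's: structural induction on $p$, invoking Lemma~\ref{THEOREM-TERM-EQUIV} for the atomic cases and using the universal quantification over assignments $\alpha'$ to push the inductive hypothesis through $\forall x:t\cdot A$ at the shifted assignments $\alpha[x\mapsto v]$. Your treatment is if anything slightly more explicit than the paper's (it spells out the $\circledast$ case and the reason the connective cases are functional in their arguments), but there is no substantive difference in approach.
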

\begin{proof}
This theorem can be proved by an induction on the structure of $p$.\\
\textbf{BASE:}
\begin{itemize}
\item If $p$ is of the form $f(e_1,\dots,e_n)$, and $f$ is a \textbf{boolean}-typed function symbol (or a predicate symbol). $p$ itself is the only top-level term of $p$. From Lemma~\ref{THEOREM-TERM-EQUIV}, $\seml p\semr ^A_\alpha=\seml p\semr ^{A'}_\alpha$.
\item If $p$ is of the form $e_1 = e_2$. From Lemma~\ref{THEOREM-TERM-EQUIV}, $\seml e_i\semr ^A_\alpha=\seml e_i\semr ^{A'}_\alpha$ for $i=1,2$. So $\seml p\semr ^A_\alpha=\seml p\semr ^{A'}_\alpha$.
\item If $p$ is of the form $e:t$. From Lemma~\ref{THEOREM-TERM-EQUIV}, $\seml e\semr ^A_\alpha=\seml e\semr ^{A'}_\alpha$ and $t^A=t^{A'}$. So $\seml p\semr ^A_\alpha=\seml p\semr ^{A'}_\alpha$.
\end{itemize}
\textbf{INDUCTION:}
\begin{itemize}
\item If $p$ is of the form $\forall x:t\cdot p'$. A top-level term of $p$ is also a top-level term of $p'$. From the inductive assumption, for an assignment $\alpha(x\rightarrow v)$ for an arbitrary $t$-typed value $v$, $\seml p'\semr ^A_{\alpha(x\rightarrow v)}=\seml p'\semr ^{A'}_{\alpha(x\rightarrow v)}$. According to the interpretation rule for $\forall x:t\cdot p'$, we conclude that
    $\seml p\semr ^A_{\alpha}=\seml p\semr ^{A'}_{\alpha}$.
\item The conclusion can also be proved when $p$ is of the form $\Delta p'$, $\neg p'$, and $p_1\land p_2$.
\end{itemize}
\hfill\textbf{QED}\\
\hfill$\square$
\end{proof}

\section{Syntax of programs}\label{SEC-PROG-SYNTAX}
The small program language used in this paper is strong typed. Each expression in the programs has a static P-type. An expression $e$ has a static P-type $t$ means that at the runtime, either $e$  denotes a value of type $t$ or $e$ is non-denoting. The argument types and result types of function symbols appeared in programs are definitely specified. The static types of expressions can be decided statically and automatically.  It also can be statically checked (by a compiler, for example) that each function symbol is applied to arguments with suitable static types. In this paper, it is supposed that all programs under verification have passed such static type check.

\subsection{The syntax of program expressions}
A program expression is an LPF term with following restrictions.
\begin{enumerate}
\item A program expression contains no free variable. Be noticed that a program variable $v$ occurs in a term is in fact an abbreviation for $\ast(\&v)$.
\item Only the following function (predicate) symbols can occur in program expressions.
\begin{enumerate}
\item Constant symbols for basic types (\textbf{integer}, \textbf{boolean}), $\textbf{nil}_t$ for type $t$, $\&v$ for a program variable $v$;
\item Function symbols associated with \textbf{integer} and \textbf{boolean}, like $+,-, *, \div, <, \le, =, \dots$;
\item Memory access/layout function symbols $\ast$, $\&\rightarrow n$, $\&[\,]$;
\item Boolean functions $\textbf{not}$, $\textbf{cand}$, $\textbf{cor}$ which are defined using conditional forms as follows.
 \begin{enumerate}
 \item $\textbf{not}\ x \triangleq x \mbox{?} \textbf{false} : \textbf{true}$
 \item  $x \ \textbf{cand} \ y\triangleq \neg x \mbox{?} \textbf{false} : y$
 \item $x \ \textbf{cor} \ y\triangleq x \mbox{?} \textbf{true} : y$.
 \end{enumerate}
\end{enumerate}
\end{enumerate}
We define these boolean operators because the semantic of logical connectives $\land$ and $\lor$ of LPF is different from that of the logical operators commonly used in program languages.

\subsection{The syntax of program statements}
The syntax of program statements is as follows.
$$\begin{array}{rcl}
st & ::=\ \ &\texttt{skip}\ \ |\ \ \ast e_1:=e_2\ \ |\ \ \ast e:=\texttt{alloc}(t)\ \ \\
   &        &|\ \ st;\ st\ \ |\ \ \textbf{if}\ (e) \ st\ \textbf{else}\ st\\
   &        &|\ \ \textbf{while}\ (e)\ st\\
\end{array}$$
This programming language has two kinds of primitive statements:
assignment statements and memory-allocation statements.
\begin{itemize}
\item An assignment statement $\ast e_1:=e_2$ first evaluates $e_1$ and $e_2$, then
assigns the value of $e_2$ to the memory unit referred by the value of $e_1$. The values stored in other memory units keep unchanged.
It is required that $\ast e_1$ and $e_2$ has same static type, which is limited to be \textbf{integer}, \textbf{boolean}, or a pointer type.
\item A memory-allocation statement
$\ast e:= \texttt{alloc}(t)$ allocates a memory block of type $t$, and
assigns the reference to this memory block to the memory unit referred by
the value of $e$. Furthermore, in the new memory block, all the memory units storing pointer values are initialized to \textbf{nil}. It is required that the static type of $\ast e$ is $\textbf{P}(t)$.
\end{itemize}

The semantics of the composite statements $st;st$, $\textbf{if}\ (e) \ st\ \textbf{else}\ st$, and $\textbf{while}\ (e)\ st$ are same as those commonly used in real program languages. It is required that in $\textbf{if}\ (e) \ st\ \textbf{else}\ st$ and $\textbf{while}\ (e)\ st$, the static type of $e$ must be \textbf{boolean}.

\begin{example}
The program depicted in Figure~\ref{EXAMPLE} is a running example
used in this paper. The type of the program variables $\textsf{k}$ and
$\textsf{d}$ is \textbf{integer}. The type of program variables $\textsf{root}$ and $\textsf{p}$ is
\textbf{P}($T$), where $T$ is $\textbf{REC}((l, \textbf{P}(T))\times(r,\textbf{P}(T))\times (K,\textbf{integer})\times(D,\textbf{integer}))$. This program
first searches a binary search tree for a node of which the field $K$ equals $\textsf{k}$. Then it sets the
filed $D$ of this node to $\textsf{d}$. Please be noticed that $\textsf{p}$, $\textsf{root}$, $\textsf{k}$, $\textsf{d}$, $\textsf{p}\rightarrow K$, $\textsf{p}\rightarrow D$, $\textsf{p}\rightarrow l$, $\textsf{p}\rightarrow r$ are respectively abbreviations for
$\ast(\&\textsf{p})$,  $\ast(\&\textsf{root})$, $\ast(\&\textsf{k})$, $\ast(\&\textsf{d})$, $\ast(\&\textsf{p}\rightarrow K)$, $\ast(\&\textsf{p}\rightarrow D)$, $\ast(\&\textsf{p}\rightarrow l)$, $\ast(\&\textsf{p}\rightarrow r)$.
\end{example}

\begin{figure}
\begin{center}
\parbox{300pt}{
\begin{tabbing}
\textsf{p}:=\textsf{root};\\
\textbf{while} \= ($\textsf{p}\rightarrow K \neq \textsf{k}$)\\
\{\\
\>\textbf{if} ($\textsf{k} < \textsf{p}\rightarrow K$ ) $\textsf{p} := \textsf{p}\rightarrow l$ \textbf{else} $\textsf{p} := \textsf{p}\rightarrow r$;\\
 \}\\
$\textsf{p}\rightarrow D := \textsf{d}$;
\end{tabbing}}
\end{center}
\caption{The program used as a running example}\label{EXAMPLE}
\end{figure}

\section{Syntax of specifications}\label{SEC-SYN-SPEC}
A program specification is of the form $\mathbb{P}\vdash q \{c\} r$, where  $c$ is a program, $\mathbb{P}$ is a set of LPF formulae, $q$ and $r$ are LPF formulae satisfying the following conditions.
 \begin{itemize}
 \item They contain only function symbols defined in $\mathbb{P}$, the function symbols which can occur in program expressions, and the function symbols associated with abstract types.
 \item $q$ and $r$ contains no sub-formula of the form $f(x_1,\dots,x_n)\triangleq e$.
 \end{itemize}
The formula set $\mathbb{P}$ is called the premise of this specification. $\mathbb{P}$ usually contains a set of function definitions. The formulae $q$ and $r$ are respectively called  the pre-condition and post-condition. Intuitively speaking, such a specification means that if all the formulae in $\mathbb{P}$ hold for arbitrary program states, and the program $c$ starts its execution on a state satisfying $q$, then the state satisfies $r$ when the program $c$ stops.
\begin{example}
Let $\mathbb{P}$ be the set of formulae depicted in Figure~\ref{DATA-STRUCTURE-INTERPRETATION-FUNCTIONS}. These formulae define a set of data retrieve functions. The boolean function $\texttt{InHeap}$ is defined in sub-section~\ref{SUB-SEC-MEM-ALLOC}. $\texttt{InHeap}(x)$ means that $x$ refers to a memory block disjoint with all memory blocks for program variables. Let
$$q = \textsf{isHBST}(\textsf{root})\land \textsf{Map}(\textsf{root})=M \land \textsf{k} \in \textsf{Dom}(\textsf{root})$$
$$r=\textsf{isHBST}(\textsf{root})\land \textsf{Map}(\textsf{root})=M\dag\{\textsf{k}\mapsto \textsf{d}\}$$
$Prog$ is the program depicted in Figure~\ref{EXAMPLE}. The specification  $\mathbb{P}\vdash q \{Prog\} r$ says that if the program state satisfies the following conditions when $\{Prog\}$ starts.
\begin{enumerate}
\item The value of $\textsf{root}$ points to the root node of a binary search tree stored in the heap;
\item The tree represents a finite map $M$ from \textbf{integer} to \textbf{integer};
\item The value stored in $\textsf{k}$ is in the domain of this map,
\end{enumerate}
When $Prog$ stops, $\textsf{root}$ still points to the root node of the binary search tree, and now the finite map represented by the binary search tree becomes $M\dag \{\textsf{k}\mapsto\textsf{d}\}$.
\end{example}

\begin{figure}
\begin{tabbing}
$\textsf{NodeSet}(x):\textbf{P}(T)\rightarrow \textbf{SetOf}(\textbf{Ptr})$\\
\ \ \ \ \ \ \ \ \=$\triangleq$ $(x=\textbf{nil}) ? $\=\ $\emptyset : (\{x\}\cup \textsf{NodeSet}(x\rightarrow l)\cup \textsf{NodeSet}(x\rightarrow r))$\\
\\
$\textsf{Map}(x):\textbf{P}(T)\rightarrow \textbf{Map integer to integer}$\\
        \>$\triangleq(x=\textbf{nil}) ? \emptyset :\{x\rightarrow K\mapsto x\rightarrow D\}\dag \textsf{Map} (x\rightarrow l)\dag \textsf{Map}(x\rightarrow r)$\\
\\
$\textsf{MapP}(x,y):\textbf{P}(T)\times\textbf{P}(T)\rightarrow \textbf{Map integer to integer}$\\
        \>$\triangleq(x=\textbf{nil}) ? \emptyset : \textsf{MapP}(x\rightarrow l) \dag \textsf{MapP} (x\rightarrow r)\dag$\\
           \>\ \ \ \ \ \ \ $((x = y)? \emptyset : \{x\rightarrow K\mapsto x\rightarrow D\})$\\
\\
$\textsf{Dom}(x):\textbf{P}(T)\rightarrow\textbf{SetOf}(\textbf{integer})$\\
        \>$\triangleq (x=\textbf{nil}) ? \emptyset : (\{x \rightarrow K\} \cup \textsf{Dom}(x\rightarrow l)\cup \textsf{Dom}(x\rightarrow r))$\\
\\
$\textsf{isHBST}(x):\textbf{P}(T)\rightarrow \textbf{boolean}$\\
        \>$\triangleq (x=\textbf{nil}) ? \textbf{true} : \texttt{InHeap}(x)\land \textsf{isHBST}(x\rightarrow l)\land \textsf{isHBST}(x\rightarrow r)\land$\\
         \>         \> $(\textsf{Dom}(x\rightarrow l)=\emptyset?\texttt{true}: \texttt{MAX}(\textsf{Dom}(x\rightarrow l))<x\rightarrow K) \land $\\
         \>         \>$(\textsf{Dom}(x\rightarrow r)=\emptyset?\texttt{true}:x\rightarrow K < \texttt{MIN}(\textsf{Dom}(x\rightarrow r)))$\\
\end{tabbing}
\caption{The definitions of a set of data retrieve functions}\label{DATA-STRUCTURE-INTERPRETATION-FUNCTIONS}
\end{figure}

\section{Proof rules of program statements}\label{SEC-AXIOM}
In this section, we present the proof rules for program statements.
There are three rules for primitive statements, one rule for consequences, and three rules for control flow statements.

\subsection{The proof rule for \textbf{skip} statement}
The skip statement changes nothing, so we have the following proof rule.

$$\framebox{SKIP-ST}\frac{\ \ \ \ }
{\ \ \ \ \ \ \ \ \ \emptyset \vdash q\{\texttt{skip}\}q\ \ \ \ \ \ \ \ \ \ }$$

\subsection{The proof rule for assignment statements}
Let $q$ be an LPF formula and $x$ be the only free variable in $q$. Let $t$ be the static type of $\ast e_1$ and $e_2$. The type $t$ must be \textbf{integer}, \textbf{boolean}, or $\textbf{P}(t')$ for some $t'$.
We have the following proof rule for assignment statements.

$$\framebox{ASSIGN-ST}\frac{\begin{array}{l}\mathbb{P}, q[e_2/x]\vdash e_1\neq \textbf{nil}\land e_1 \not\in \mathfrak{M}(e_1)\land e_2:t\\
   \mathbb{P}, q[e_2/x]\vdash e_1\not\in\mathfrak{M}(e)[e_2/x] \mbox{ for each top-level term $e$ of $q$}
   \end{array}
   }{\mathbb{P}\vdash q[e_2/x]\{\ast e_1:=e_2\}q[\ast e_1/x]}$$\\
Here, it is required that all bounded variables in $q$ are different from $x$. A term $e$ of $q$ is called a top-level one if it is not a sub-term of another term of $q$.
Furthermore, it is required that for each conditional term $e_0?e_1:e_2$ of $q$, $e_0$ is a boolean-typed term, so we can construct a memory form of each top-level term of $q$.

Now we briefly prove the soundness of this rule. We can use two structure $A$
and $A'$ to denote the program states before/after the assignment statement. $A$ and $A'$ are only different in the interpretations of the function symbol $\ast$ and the symbols defined in $\mathbb{P}$.
The semantic of an assignment $\ast e_1=e_2$ is as follow. It first
evaluates the value of $e_1$ and $e_2$, i.e. $\seml e_1\semr^A_\alpha$ and $\seml e_2\semr^A_\alpha$, then the content of
the memory unit referred by $\seml e_1\semr^A_\alpha$ is set to $\seml e_2\semr^A_\alpha$.
Formally, we say $\ast^{A'}(\seml e_1\semr ^A_\alpha) = \seml e_2\semr ^A_\alpha$, and $\ast^A(x)=\ast^{A'}(x)$ for all $x\neq \seml e_1\semr ^A_\alpha$.
According to Lemma~\ref{THEOREM-TERM-EQUIV}, the condition $e_1\not\in\mathfrak{M}(e_1)$
assures that $\seml e_1\semr ^A_\alpha = \seml e_1\semr ^{A'}_\alpha$, so $\seml e_2\semr ^A_\alpha = \ast^{A'}(\seml e_1\semr ^A_\alpha) =
\ast^{A'}(\seml e_1\semr ^{A'}_\alpha) = \seml \ast e_1\semr ^{A'}_\alpha$.
The condition $e_1\neq \textbf{nil}\land e_2:t$ assures that both $\seml \ast e_1\semr ^A_\alpha$ and $\seml e_2\semr ^A_\alpha$ are not $\bot$. Together with these conditions, the condition  $e_1\not\in \mathfrak{M}(e)[e_2/x]$  assures that for each top-level term $e$, $\seml e_1\semr^A_\alpha\not\in \seml\mathfrak{M}(e)[e_2/x]\semr^A_\alpha$, which equals to $\seml\mathfrak{M}(e)\semr^A_{\alpha(x\rightarrow \seml e_2\semr^A_\alpha)}$. From Lemma~\ref{THEOREM-TERM-EQUIV}, we have $\seml e\semr ^A_{\alpha(x\rightarrow \seml e_2\semr ^A_\alpha)}=\seml e\semr ^{A'}_{\alpha(x\rightarrow \seml e_2\semr ^A_\alpha)}=\seml e\semr ^{A'}_{\alpha(x\rightarrow \seml \ast e_1\semr ^{A'}_\alpha)}$. So we have
$\seml e[e_2/x]\semr ^A_\alpha=\seml e[\ast e_1/x]\semr ^{A'}_\alpha$. As $\alpha$ is arbitrary, according to Theorem~\ref{FORMULA-EQUIV},
$\seml q[e_2/x]\semr ^A_{\alpha}=\seml q[\ast e_1/x]\semr ^{A'}_{\alpha}$. So we conclude that if $q[e_2/x]$ holds before the assignment statement, $q[\ast e_1/x]$ holds after.

\subsection{The proof rule for memory allocation statements}\label{SUB-SEC-MEM-ALLOC}
The memory allocation statement $\ast e=\textsf{alloc}(t)$ first evaluates $e$, then allocates an unused memory block and assigns the reference to this block to the memory unit referred by $e$. All the memory units storing pointer values are initialized to $\textbf{nil}$. This block can not be referred by any pointers stored somewhere before this allocation. Furthermore, this block is disjoint with all of the memory blocks allocated for program variables. It is required that the static type of $\ast e$ must be $\textbf{P}(t)$.
Let  $p$ be an LPF formula containing no free variable, we have the following proof rule for memory allocation statements.

$$\framebox{ALLOC-ST}
\frac{
\begin{array}{l}
\mathbb{P}\land q\vdash e\neq \textbf{nil} \land e \not\in \mathfrak{M}(e)\\
\mathbb{P}\land q\vdash e\not\in \mathfrak{M}(e') \mbox{ for each top-level term $e'$ of $q$}
\end{array}
}
{
\mathbb{P}\vdash q\ \{*e=\textsf{alloc}(t)\}
\left(\begin{array}{l}
q \land \texttt{InHeap}(\ast e)\land \texttt{Unique}(e) \\
\land \texttt{PtrInit}(\ast e)
\land (\ast e\neq \textbf{nil})\end{array}\right)
}
$$
The predicts \texttt{Unique}, \texttt{InHeap}, and \texttt{PtrInit} are defined as follows.
$$\texttt{Unique}(x)\triangleq \forall y:\textbf{Ptr}\cdot ((y\neq x\land y\neq \textbf{nil}\land \ast y:\textbf{Ptr})\Rightarrow \texttt{Block}(\ast y)\cap \texttt{Block}(*x)=\emptyset)$$
$$\texttt{InHeap}(p)\triangleq  \bigwedge_{x\mbox{ is a program variable.}}(\texttt{Block}(\&v) \cap \texttt{Block}(p)=\emptyset)$$
$$\texttt{PtrInit}(p)\triangleq \forall x:\textbf{Ptr}\cdot ((x\in \texttt{Block}(p)\land x\neq \textbf{nil} \land \ast x:\textbf{Ptr}) \Rightarrow \ast x=\textbf{nil})$$
Intuitively speaking, $\texttt{Unique}(p)$ says that the memory block referred by the reference stored in $p$ can not be accessed by references stored elsewhere. $\texttt{InHeap}(p)$ says that the memory block referred by $p$ is disjoint with all the memory blocks for program variables.
$\texttt{PtrInit}(p)$ says that all memory units with pointer types in the memory block referred by $p$ store \textbf{nil} pointers.

Similarly to the soundness reasoning for the rule ASSIGN-ST, we can conclude that $q$ still holds after the allocation statement if it holds before. Because the allocated memory block is unused, it can not be accessed by any pointers stored somewhere before this memory allocation. This allocation statement assigns the reference to this block only to the memory unit referred by $e$. So $\texttt{Unique}(e)$ holds after this allocation statement.
The new allocated block is disjoint with any blocks for program variables. So $\texttt{InHeap}(\ast e)$ holds after this allocation statement.
The post condition $\texttt{PtrInit}(\ast e)$ holds because the new block is initialized as described above. So we conclude that this proof rule is sound.

\subsection{The consequence rule and the rules for control flow statements}
The following proof rules are essentially the same as those presented in \cite{HOARE-1}.
The consequence rule is slightly modified such that the premise of a verified assertion
can be strengthened. The rules for \texttt{if}-statement and \texttt{while}-statement are modified such that the pre-condition ensures that the condition expression $e$ is evaluated to either $T$ or $F$.

$$\framebox{CONSEQ} \frac{\ \ \ \mathbb{P}\vdash q\{s\}r \ \ \ \ \mathbb{Q}\vdash\mathbb{P}\ \ \ \ \ \mathbb{P}, q' \vdash q\ \ \ \ \ \mathbb{P},r\vdash r'\ \ \ }
{\mathbb{Q} \vdash q'\{s\}r'}$$\\

$$\framebox{SEQ-ST}\frac{\ \ \ \mathbb{P}\vdash q\{s_1\}r \ \ \ \ \ \mathbb{P}\vdash r\{s_2\}r'\ \ \ }
{\mathbb{P}\vdash q\{s_1;s_2\}r'}$$\\

$$\framebox{IF-ST}\frac{\ \ \ \mathbb{P},q\vdash e\lor\neg e\ \ \ \ \ \mathbb{P}\vdash (q \land e)\{ s_1\}r \ \ \ \ \mathbb{P}\vdash (q\land \neg e)\{s_2\} r\ \ \ }
{\mathbb{P}\vdash q\{\mbox{ \texttt{if} }(e)\ s_1\mbox{
\texttt{else} }s_2\ \}r}$$\\

$$\framebox{WHILE-ST}\frac{\ \ \ \ \mathbb{P},q\vdash e\lor\neg e\ \ \ \ \ \mathbb{P}\vdash (q \land e)\{s\}q}
{\ \ \ \mathbb{P}\vdash q\{\mbox{ \texttt{while} }(e)\ \ s\ \} q\land
\neg e\ \ \ }
$$

\section{Verifying the running example}\label{SEC-PROOF}
In this section, we verify the program depicted in Figure~\ref{EXAMPLE}.

\subsection{The DRFs, MSFs and their properties.}

\begin{example}
Figure~\ref{DATA-STRUCTURE-INTERPRETATION-FUNCTIONS} shows the
data-retrieve functions for specifying and verifying the program depicted in
Figure~\ref{EXAMPLE}. From the proof rule \textrm{SCOPE-FUNC} in Section~\ref{SUBSEC-SCOPE}, we can derive the definitions of all corresponding MSFs. The definitions of MSFs depicted in Figure~\ref{SCOPE-FUNCTIONS} are simplified but equivalent to those derived directly by the rule \textrm{SCOPE-FUNC}. For conciseness, we write $\mathfrak{M}(\textsf{NodeSet})$ as $\verb"NS"_m$,
$\mathfrak{M}(\textsf{Map})$ as $\verb"MP"_m$, $\mathfrak{M}(\textsf{MapP})$ as $\verb"MPP"_m$, $\mathfrak{M}(\textsf{Dom})$ as $\verb"DM"_m$, $\mathfrak{M}(\textsf{isHBST})$ as $\verb"HBST"_m$.  Some properties about these DRFs and MSFs are depicted in Figure~\ref{DRF-MSF-PROPERTIES}. These properties can be proved in the extended LPF.
\end{example}

\begin{figure}
\begin{tabbing}
$\verb"NS"_m(x)$\=$\triangleq$ $(x=\textbf{nil}) ? $\=\ $\emptyset : (\{\&x\rightarrow l, \&x\rightarrow r\}\cup \verb"NS"_m(x\rightarrow l)\cup \verb"NS"_m(x\rightarrow r))$\\
\\
$\verb"MP"_m(x)\triangleq(x=\textbf{nil}) ? \emptyset :$\\
           \>$\{\&x\rightarrow l, \&x\rightarrow r, \&x\rightarrow D, \&x\rightarrow K\}\cup \verb"MP"_m(x\rightarrow l)\cup \verb"MP"_m(x\rightarrow r)$\\
\\
$\verb"MPP"_m(x,y)\triangleq(x=\textbf{nil}) ? \emptyset :$\\
           \>$\{\&x\rightarrow l, \&x\rightarrow r\}\cup \verb"MPP"_m(x\rightarrow l) \cup \verb"MPP"_m(x\rightarrow r)\cup$\\
           \>$((x = y)? \emptyset : \{\&x\rightarrow K,\&x\rightarrow D\})$\\
\\
$\verb"DM"_m(x)\triangleq (x=\textbf{nil}) ? \emptyset : (\{\&x \rightarrow K, \&x\rightarrow l, \&x\rightarrow r\} \cup \verb"DM"_m(x\rightarrow l)\cup \verb"DM"_m(x\rightarrow r))$\\
\\
$\verb"HBST"_m(x)\triangleq (x=\textbf{nil}) ? \emptyset : \{\&x\rightarrow l, \&x\rightarrow r\}\cup \verb"HBST"_m(x\rightarrow l)\cup \verb"HBST"_m(x\rightarrow r)\cup$\\
         \>$\verb"DM"_m(x\rightarrow l)\cup (\textsf{Dom}(x\rightarrow l)=\emptyset?\emptyset: \{\&x\rightarrow l\}\cup \{\&x\rightarrow K\}\cup \verb"DM"_m(x\rightarrow l)) \cup $\\
         \>$\verb"DM"_m(x\rightarrow r)\cup (\textsf{Dom}(x\rightarrow r)=\emptyset?\emptyset: \{\&x\rightarrow r\}\cup \{\&x\rightarrow K\} \cup \verb"DM"_m(x\rightarrow r)))$\\
\end{tabbing}
\caption{The definitons of MSFs}\label{SCOPE-FUNCTIONS}
\end{figure}

\begin{figure}

\begin{equation}\mathbb{P}, \textsf{isHBST}(x)\vdash\&\textsf{p}\not\in\verb"HSBT"_m({x})\cup\verb"MP"_m({x})\cup\verb"DM"_m({x})\label{P-ISOLATE}\end{equation}

\begin{equation}\mathbb{P}, \textsf{isHBST}(x)\vdash\&\textsf{p}\rightarrow D \not\in\verb"HSBT"_m({x})\cup\verb"MPP"_m({x},\textsf{p})\cup\verb"DM"_m({x})\label{PD-ISOLATE}\end{equation}

\begin{equation}\mathbb{P}, \textsf{isHBST}(x), y\in \textsf{Dom}(x), y<x\rightarrow K \vdash y\in\textsf{Dom}(x\rightarrow l)\label{LEFT-TREV}\end{equation}

\begin{equation}\mathbb{P}, \textsf{isHBST}(x), y\in \textsf{Dom}(x), y>x\rightarrow K \vdash y\in\textsf{Dom}(x\rightarrow r)\end{equation}

\begin{equation}\label{MapPtEqMap}
\mathbb{P}, \textsf{isHBST}(x), \textsf{y} \in \textsf{NodeSet}(x)\vdash \textsf{Map}(x)=\textsf{MapP}(x,y)\dag
\{y\rightarrow K\mapsto y\rightarrow D\}\label{MAPMAPP}
\end{equation}

\begin{equation}\mathbb{P}, \textsf{NodeSet}(x):\textbf{SetOf}(\textbf{Ptr})\vdash x\in \textsf{NodeSet}(x)\label{NODE-SET-CONTAIN}\end{equation}

\caption{Some properties about DRFs and MSFs}\label{DRF-MSF-PROPERTIES}
\end{figure}

\subsection{Verifying the program}
In this section, we will prove that if $\textsf{root}$ points to a binary search tree, and we view this binary tree as a finite map, and $\textsf{k}$ is in the domain of this map, the program depicted in Figure~\ref{EXAMPLE} set the co-value of $\textsf{k}$ to $\textsf{d}$. In this section, we use $\mathbb{P}$ to denote the set of the function definitions in Figure~\ref{DATA-STRUCTURE-INTERPRETATION-FUNCTIONS}. The specification is as follow.
$$
\mathbb{P}\vdash \textsf{PRE-COND} \ \ \{\textsl{Prog}\}\ \ \textsf{isHBST}(\textsf{root}) \land \textsf{Map}(\textsf{root})=M\dag\{\textsf{k}\mapsto \textsf{d}\}
$$
Here, $\textsf{PRE-CON}$ is the abbreviation for $\textsf{isHBST}(\textsf{root}) \land \textsf{Map}(\textsf{root})= M \land \textsf{k}\in\textsf{Dom}(\textsf{root})$, $M$ is a constant with type \textbf{Map integer to integer}. The verification steps are given below.

From \textrm{ASSIGN-ST}, \ref{P-ISOLATE} and $\&\textsf{p}\not\in\{\&\textsf{root}, \&\textsf{k}\}$:
\begin{equation}
\mathbb{P}\vdash\left(\begin{array}{l}
(\textsf{PRE-COND}\land x \in \textsf{NodeSet}(\textsf{root}) \land \textsf{k}\in \textsf{Dom}(x))[\textsf{root}/x]\\
\ \ \ \ \ \{ \textsf{p} = \textsf{root};\}\ \ \ \\
(\textsf{PRE-COND}\land x \in \textsf{NodeSet}(\textsf{root}) \land \textsf{k}\in \textsf{Dom}(x))[\textsf{p}/x]
\end{array}\right )
\label{assign-1-PREV}
\end{equation}
From $\land$-I, \ref{assign-1-PREV}, \textrm{CONSEQ}, \ref{NODE-SET-CONTAIN}:
\begin{equation}
\mathbb{P}\vdash \textsf{PRE-COND}\ \{ \textsf{p} = \textsf{root};\}\ \textsf{PRE-COND} \land \textsf{p}\in \textsf{NodeSet}(\textsf{root})\land \textsf{k}\in \textsf{Dom}(\textsf{p})
\label{SPEC-ASSIGN-1}\end{equation}
From \textrm{ASSIGN-ST}, \ref{P-ISOLATE}, and $\&\textsf{p}\not\in\{\&\textsf{root}, \&\textsf{k}\}$:
\begin{equation}
\mathbb{P}\vdash\left(
\begin{array}{l}
(\textsf{PRE-COND} \land  x\in \textsf{NodeSet}(\textsf{root}) \land \textsf{k}\in \textsf{Dom}(x))[\textsf{p}\rightarrow l/x]\\
\ \ \ \ \ \{\textsf{p}:=\textsf{p}\rightarrow l;\}\\
(\textsf{PRE-COND} \land  x\in \textsf{NodeSet}(\textsf{root}) \land \textsf{k}\in \textsf{Dom}(x))[\textsf{p}/x]
\end{array}\right)\label{PEQPL}
\end{equation}
From \ref{LEFT-TREV}, \textbf{substitution}:
\begin{equation}
\begin{array}{l}
\mathbb{P}, \textsf{PRE-COND}, \textsf{p}\in\textsf{NodeSet}(\textsf{root}), \textsf{k}\in \textsf{Dom}(\textsf{p}), \textsf{k}<\textsf{p}\rightarrow K\vdash \\
\ \ \ \ \ \textsf{p}\rightarrow l\in \textsf{NodeSet}(\textsf{root})\land \textsf{k}\in \textsf{Dom}(\textsf{p}\rightarrow l)
\end{array}\label{LTRAV1}
\end{equation}
From \ref{PEQPL}, \ref{LTRAV1}, and \textrm{CONSEQ}:
\begin{equation}
\mathbb{P}\vdash\left(
\begin{array}{l}
\textsf{PRE-COND} \land\textsf{p}\in \textsf{NodeSet}(\textsf{root}) \land \textsf{k}\in \textsf{Dom}(\textsf{p})\land \textsf{k}< \textsf{p}\rightarrow K\\
\ \ \ \ \ \{\textsf{p}:=\textsf{p}\rightarrow l;\}\\
\textsf{PRE-COND} \land \textsf{p}\in \textsf{NodeSet}(\textsf{root}) \land \textsf{k}\in \textsf{Dom}(\textsf{p})
\end{array}\right)
\label{SPEC-ASSIGN-2}\end{equation}
Similarly, we can prove:
\begin{equation}
\mathbb{P}\vdash\left(
\begin{array}{l}
\textsf{PRE-COND} \land \textsf{p}\in \textsf{NodeSet}(\textsf{root})\land\textsf{k}\in \textsf{Dom}(\textsf{p})\land \textsf{k}>\textsf{p}\rightarrow K\\
\ \ \ \ \ \{\textsf{p}:=\textsf{p}\rightarrow r;\}\\
\textsf{PRE-COND} \land\textsf{p}\in \textsf{NodeSet}(\textsf{root})\land\textsf{k}\in \textsf{Dom}(\textsf{p})
\end{array}\right)\label{SPEC-ASSIGN-3}
\end{equation}
$\textsf{k}\in \textsf{Dom}(\textsf{p})$ implies $\textsf{p}\neq \textbf{nil}$, thus $\textsf{k}<\textsf{p}\rightarrow K\lor \textsf{k}\ge\textsf{p}\rightarrow K$. From \textrm{IF-ST}, \ref{SPEC-ASSIGN-2}, \ref{SPEC-ASSIGN-3}, and :
\begin{equation}
\mathbb{P}\vdash\left(
\begin{array}{l}
\textsf{PRE-COND} \land \textsf{p}\in \textsf{NodeSet}(\textsf{root}) \land\textsf{k}\in \textsf{Dom}(\textsf{p}) \land \textsf{p}\rightarrow K\neq \textsf{k}\\
\ \ \ \ \ \{\mbox{\textbf{if} }(\textsf{k}<\textsf{p}\rightarrow K)\ \textsf{p}:=\textsf{p}\rightarrow l;\ \textbf{else}\ \textsf{p}:=\textsf{p}\rightarrow r;\}\\
\textsf{PRE-COND} \land \textsf{p}\in \textsf{NodeSet}(\textsf{root}) \land \textsf{k}\in \textsf{Dom}(\textsf{p})
\end{array}\right)\label{SPEC-IF}
\end{equation}
$\textsf{k}\in\textsf{Dom}(\textsf{p})$ implies that $\textsf{p}\neq \textbf{nil}$, thus $\textsf{p}\rightarrow K \neq \textsf{k}\lor\textsf{p}\rightarrow K = \textsf{k}$. From \textrm{WHILE-ST}, \ref{SPEC-IF}:
\begin{equation}
\mathbb{P}\vdash\left(\begin{array}{l}
\textsf{PRE-COND} \land\textsf{p}\in \textsf{NodeSet}(\textsf{root}) \land \textsf{k}\in \textsf{Dom}(\textsf{p})\\
\ \ \ \ \ \{\mbox{\emph{the while statement}}\}\\
\textsf{PRE-COND} \land\textsf{p}\in \textsf{NodeSet}(\textsf{root}) \land \textsf{k}\in \textsf{Dom}(\textsf{p})\land \textsf{p}\rightarrow K= \textsf{k}\\
\end{array}\right)\label{WHILE-ST-SPEC}
\end{equation}
From \ref{MAPMAPP} and the properties of finite map:
\begin{equation}
\begin{array}{l}
\mathbb{P}, \textsf{isHBST}(x), \textsf{p} \in \textsf{NodeSet}(x), \textsf{k}=\textsf{p}\rightarrow K\vdash \\
\ \ \ \ \ \textsf{TMapP}(x,\textsf{p})\dag\{\textsf{p}\rightarrow K\mapsto y\} = \textsf{Map}(x)\dag\{\textsf{k}\mapsto y\}
\end{array}\label{MAPEQ}
\end{equation}
From \ref{MAPEQ}, \textbf{substitution}:
\begin{equation}
\begin{array}{l}
\mathbb{P}, \textsf{PRE-COND}, \textsf{p} \in \textsf{NodeSet}(\textsf{root}), \textsf{k}=\textsf{p}\rightarrow K\vdash \\
\textsf{isHBST}(\textsf{root}) \land \textsf{MapP}(\textsf{root},\textsf{p})\dag\{\textsf{p}\rightarrow K \mapsto \textsf{d}\}=M \dag\{\textsf{k}\mapsto\textsf{d}\})
\end{array}\label{IMP}
\end{equation}
From the rule \textrm{ASSIGN-ST}, \ref{PD-ISOLATE} and $\&\textsf{p}\rightarrow D\not\in\{\&\textsf{root}, \&\textsf{p}, \&\textsf{p}\rightarrow K, \&\textsf{k}, \&\textsf{d}\}$:
\begin{equation}
\mathbb{P}\vdash\left(
\begin{array}{l}
(\textsf{isHBST}(\textsf{root})\land \textsf{MapP}(\textsf{root},\textsf{p})\dag\{\textsf{p}\rightarrow K \mapsto x\}=M \dag\{\textsf{k}\mapsto\textsf{d}\})[\textsf{d}/x]\\
\ \ \ \ \ \{\textsf{p}\rightarrow D := \textsf{d}\}\\
(\textsf{isHBST}(\textsf{root})\land \textsf{MapP}(\textsf{root},\textsf{p})\dag\{\textsf{p}\rightarrow K \mapsto x\}=M \dag\{\textsf{k}\mapsto\textsf{d}\})[\textsf{p}\rightarrow D/x]
\end{array}\right)\label{SPEC-ASSIGN-41}
\end{equation}
From the rule \textrm{CONSEQ}, \ref{IMP}, \ref{SPEC-ASSIGN-41}
\begin{equation}
\mathbb{P}\vdash\left(
\begin{array}{l}
\textsf{PRE-COND}\land\textsf{p}\in \textsf{NodeSet}(\textsf{root}) \land\textsf{k}\in \textsf{Dom}(\textsf{p})\land \textsf{p}\rightarrow K= \textsf{k}\\
\ \ \ \ \ \{\textsf{p}\rightarrow D := \textsf{d}\}\\
\textsf{isHBST}(\textsf{root})\land \textsf{Map}(\textsf{root}) =M \dag\{\textsf{k}\mapsto\textsf{d}\}
\end{array}\right)\label{SPEC-ASSIGN-4}
\end{equation}
From the rule \textrm{SEQ-ST}, \ref{SPEC-ASSIGN-1}, \ref{WHILE-ST-SPEC}, \ref{SPEC-ASSIGN-4}:
\begin{equation}
\mathbb{P}\vdash \textsf{PRE-COND} \ \ \{\textsl{Prog}\}\ \ \textsf{isHBST}(\textsf{root}) \land \textsf{Map}(\textsf{root})=M\dag\{\textsf{k}\mapsto \textsf{d}\}
\end{equation}

\section{Heuristics: virtual variables and pragmatic meaning of program statements}\label{SEC-HEU}
Generally speaking, a pointer program may create unbounded number of data objects during its execution.
These data objects usually interconnected through pointers. They are usually used to represent
abstract values which can be retrieved using recursively defined DRFs. We can view a set of interconnected data objects as a \emph{virtual variable}, which holds an abstract value retrieved using a DRF. Usually, such a data object set maintains a set of structural properties during the program execution. These properties can also be expressed using a set of boolean-typed DRFs. As we did in our running example, a DRF $\textsf{isHBST}$ is used to state that a set of data objects form a binary search tree, while the DRF $\textsf{Map}$ is used to retrieve a finite map from this binary search tree.

Usually, assigning new values to such a virtual variable is performed by a group of program statements. These statements change the values stored in a few number of the data objects, thus change the abstract value `stored' in the virtual variable. As to the structural properties, either none of the statements changes their values, or some statements change their values, but some other statements restore them afterwards. To reasoning the effect of these statements on the abstract value, we can define some auxiliary data-retrieve functions.
\begin{itemize}
\item These auxiliary DRFs do not accessed the memory units modified by these statements. So the abstract values retrieved by these auxiliary DRFs keep unchanged.
\item The relationship between the abstract value retrieved by the main DRFs, those retrieved by auxiliary DRFs, and the values stored in the modified memory units can be proved based on the definitions of the DRFs. For example, the property
$$
\mathbb{P}, \textsf{isHBST}(x), \textsf{y} \in \textsf{NodeSet}(x)\vdash \textsf{Map}(x)=\textsf{MapP}(x,y)\dag
\{y\rightarrow K\mapsto y\rightarrow D\}\label{MAPMAPP}
$$
shows the relation between the main DRF \textsf{Map}, the auxiliary DRF \textsf{MapP}, and the values stored in $\&y\rightarrow K$ and $\&y\rightarrow D$.
\item The values retrieved by auxiliary DRFs keep unchanged. The effect of these statements on the modified memory units can be relatively easily derived. So, the effect of these statements on the abstract value retrieved by main DRF can be reason based on the relations between main DRFs, auxiliary DRFs and the value stored in modified memory units.
\end{itemize}
To specify and verify these statements, we should understand and reason these statements as a whole, as these statements work together to assign a new value to a virtual variable. Understanding the effects of such statement groups can help us understand the whole program abstractly. In the appendix~\ref{APP-ABS}, we briefly describe such an example.
We say the effect of a group of program statements on a virtual variable as the \emph{pragmatic meaning} of these statements. Understanding and verifying the pragmatic meanings of small statement groups first, then we can verify code with larger size step by step.

\section{Conclusion and future works}\label{SEC-CONCLUSION}
In this paper, we present an extension of Hoare logic for verification of pointer programs. The pre-conditions and post-conditions are formulae of an extended version of the LPF logic, which can deal with undefinedness, recursive function definitions, and types. Program types and function symbols ($\ast$, $\&\!\!\rightarrow\!n$ and $\&[\,]$) associated with these types are introduced to model memory unit access and memory layout for composite types. A set of proof rules are introduced to specify these function symbols. Using these functions, people can deal with high-level program types (record, array) directly.

People can define recursive functions to retrieve abstract values from concrete interconnected data objects. We call these functions as data-retrieve functions (DRFs). Such functions can also be defined to specify the properties of data structures. For each data-retrieve function $f$, we can derive the definition of its corresponding memory-scope function (MSF) syntactically. When an abstract value is retrieved by applying $f$ to a set of arguments, applying the MSF of $f$ to same arguments results in a set of memory units accessed during the retrievement. As long as no memory unit in this set is modified during program executions, applying $f$ to same arguments results in same abstract value.

We present a new proof rule for assignment statements, and another rule for memory allocation statements. The proof rule for assignment statements says that after the assignment, the memory unit referred by the left-hand stores the value of the right-hand computed before the assignment. It also says that the abstract values keep unchanged if the memory unit referred by the left-hand is not in their memory scopes. The proof rule for memory allocation says that after the allocation, the memory unit referred by the left-hand stores a reference to a newly allocated memory block.

This logic has the following advantages.
\begin{itemize}
\item This logic is easy to learn. Most of the knowledge encoded in this logic have been (explicitly or implicitly) taught in undergraduate CS courses. For examples, the concept of recursive functions and first order logic are already taught in undergraduate CS courses. The proof rules about program variables, $\ast$, $\&\!\!\rightarrow\!n$, $\&[]$ are taught informally in the undergraduate courses about programming languages and compilers.
\item This logic supports reuse of proofs. Most of the proved properties of DRFs are about data structures. They are independent of the code under verification. So these properties can be reused in verification of other code using same data structures. It is possible to build a library of pre-defined DRFs, MSFs, and their properties.
\item Verification can be performed on different abstract levels. A group of statements change the abstract value represented by a set of interconnected data objects, but keep the structural properties of these data objects. People can first understand the \emph{pragmatic meaning} of these statements, i.e. the effect of these statements on the relevant abstract values. Then, they may view these data objects as a virtual variable, and the statements as an abstract statement assigning new value to this virtual variable. Thus, people can reasoning the program at a more abstract level.
\item Make use of the research results on pointer analysis. Many of the premises when applying proof rules can be proved automatically by pointer analysis. For example, for all assignment statements of the form $\ast(\&v) = e$, the premise that $\&v\neq \textbf{nil}$ can be proved by pointer analyer easily. For assignment statements of the form $\ast p=e$, the premise $\textsf{p}\neq \textbf{nil}$ of the proof rule ASSIGN-ST can also be verified automatically in many cases.
\end{itemize}

In the future, we will extended our logic to deal with more programming language concepts: function calls, function pointers, class/object, generics, $\dots$. At the mean time, we will try to build a library of pre-defined DRFs, MSFs, and their properties for frequently used data structures.

\appendix
\section{Another example: inserting a node to a binary search tree}\label{APP-BST-ADDNODE}

\begin{example}
The program depicted in Figure~\ref{PROG-2} add a new tuple $(\textsf{k},\textsf{d})$ into the map represented by a binary search tree.
The types of the program variables $\textsf{k}$ and
$\textsf{d}$ are both \textbf{integer}. The type of program variables $\textsf{rt}$ and $\textsf{tmp}$ are
\textbf{P}($T$), where $T$ is $\textbf{REC}((l,\textbf{P}(T))\times(r,\textbf{P}(T))\times
(K,\textbf{integer})\times(D,\textbf{integer}))$. The type of $\textsf{p}$ is $\textbf{P}(\textbf{P}(T))$.
\end{example}

\begin{figure}
\begin{center}
\parbox{300pt}{
\begin{tabbing}
\textsf{p}:=\&\textsf{rt};\\
\textbf{while} \= ($\ast \textsf{p} != \textbf{nil}$)\\
\{\\
\>\texttt{if} ($\textsf{k} < \textsf{p}\rightarrow K$ ) $\textsf{p} := \&(\ast \textsf{p})\rightarrow l$ \texttt{else} $\textsf{p} := \&(\ast \textsf{p})\rightarrow r$;\\
 \}\\
\textsf{tmp} = \texttt{alloc}($T$);\\
$\textsf{tmp}\rightarrow K := \textsf{k}$; $\textsf{tmp}\rightarrow D := \textsf{d}$;\\
$\ast\textsf{p}$=\textsf{tmp};
\end{tabbing}}
\end{center}
\caption{Another program}\label{PROG-2}
\end{figure}

The DRFs depicted in Figure~\ref{NEW-FUNC} are used in the specification and verification of the program depicted in Figure~\ref{PROG-2}.
If $\ast x$ points to the root node of a binary search tree, and $y$ is the address of a child-field of a node of this tree.
The DRF $\textsf{PNodeSet}(x,y)$ retrieve the set of the children-pointer-field addresses (i.e. addresses of the fields $l$ and $r$) of all the nodes in the binary search tree derived by setting $\ast y$ to \textbf{nil}. The argument $x$ is also in this set. $\textsf{MapPP}(x,y)$ retrieve the map represented by this modified binary search tree.

The boolean-typed DRF $\textsf{isHBSTK}(x,y)$ says that if we make $\ast y$ point to a newly allocated node $\{\textbf{nil}, \textbf{nil}, \textsf{k}, \textsf{d}\}$, $\ast x$ is still the root node of a binary search tree. The DRF $\textsf{DomK}(x,y)$ retrieve the keys stored in this tree.

The (simplified) definitions of the corresponding MSFs are depicted in Figure~\ref{NEW-MSFs}.
We use $\verb"DMK"_m$, $\verb"STK"_m$, $\verb"MPPP"_m$ as $\mathfrak{M}(\textsf{DomK})$, $\mathfrak{M}(\textsf{isSTK})$, and $\mathfrak{M}(\textsf{MapPP})$ respectively.
Let $\mathbb{P}'$ be the set of function definitions depicted in Figure~\ref{DATA-STRUCTURE-INTERPRETATION-FUNCTIONS} and Figure~\ref{NEW-FUNC}. Some of the properties about the DRFs in $\mathbb{P}'$  and corresponding MSFs are depicted in Figure~\ref{NEW-PROPER}.
Some of the DRFs and MSFs in Section~\ref{SEC-PROOF}, together with their properties, are reused in this verification.
\begin{figure}
\begin{tabbing}
$\textsf{DomK}(x,y): \textbf{P}(\textbf{P}(T))\times\textbf{P}(\textbf{P}(T))\rightarrow \textbf{SetOf}(\textbf{integer})$\\
\ \ \ \ $\triangleq$ \= $ (x= y)\ \ \ \ $         \= $\,?\ \{\textsf{k}\}:$\\
                     \> $(\ast x=\textbf{nil})$   \= $\,?\ \emptyset : \{(\ast x) \rightarrow K\} \cup \textsf{DomK}(\&(\ast x)\rightarrow l,y)\cup \textsf{DomK}(\&(\ast x)\rightarrow r,y)$\\
\\
$\textsf{isHBSTK}(x,y):\textbf{P}(\textbf{P}(T))\times\textbf{P}(\textbf{P}(T))\rightarrow \textbf{boolean}$\\
\ \ \ \ $\triangleq$ \> $(x=y)$ \>$\,?\ \texttt{TRUE}\,:$\\
                       \> $(\ast x=\textbf{nil})$ \>$\,?\ \texttt{TRUE} :$\\
                       \>     \>$\texttt{InHeap}(\ast x)\land \textsf{isHBSTK}(\&(\ast x)\rightarrow l,y)\land\textsf{isHBSTK}(\&(\ast x)\rightarrow r,y)\land$\\
                       \>     \>$(\textsf{DomK}(\&(\ast x)\rightarrow l)=\emptyset?\texttt{TRUE}: \texttt{MAX}(\textsf{DomK}(\&(\ast x)\rightarrow l))<(\ast x)\rightarrow K) \land$\\
                       \>     \>$(\textsf{DomK}(\&(\ast x)\rightarrow r)=\emptyset?\texttt{TRUE}:(\ast x)\rightarrow K < \texttt{MIN}(\textsf{DomK}(\&(\ast x)\rightarrow r)))$\\
\\
$\textsf{MapPP}(x,y):\textbf{P}(\textbf{P}(T))\times\textbf{P}(\textbf{P}(T))\rightarrow \textbf{Map integer to integer})$\\
\ \ \ \ $\triangleq$ \> $(x=y)$ \> $\,?\ \emptyset\,:$\\
                     \> $(\ast x=\textbf{nil})$ \> $\,?\,\emptyset\,:$\\
                     \>         \> $\textsf{MapPP}(\&(\ast x)\rightarrow l,y)\dag\textsf{MapPP}(\&(\ast x)\rightarrow r,y)\dag\{(\ast x)\rightarrow K\mapsto (\ast x)\rightarrow D\}$\\
\\
$\textsf{PNodeSet}(x,y) : \textbf{P}(\textbf{P}(T)) \times \textbf{P}(\textbf{P}(T)) \rightarrow \textbf{SetOf}(\textbf{Ptr})$\\
\ \ \ \ $\triangleq\{x\}\cup( (x=y)? \emptyset : (\ast x = \textbf{nil})\,?\,\emptyset : (\textsf{PNodeSet}(\&(\ast x)\rightarrow l,y)\cup \textsf{PNodeSet}(\&(\ast x)\rightarrow r,y))))$\\
\end{tabbing}
\caption{DRFs for specifying and verifying the program in Figure~\ref{PROG-2}}\label{NEW-FUNC}
\end{figure}

\begin{figure}
\begin{tabbing}
$\verb"DMK"_m(x,y):\textbf{P}(\textbf{P}(T))\times\textbf{P}(\textbf{P}(T))\rightarrow \textbf{SetOf}(\textbf{Ptr})$\\
\ \ \ \ $\triangleq$ \= $(x=y)\ \ \ \ \ $ \=\ $? \{\&\textsf{k}\} $\\
                     \> $(\{x\}\cup (\ast x=\textbf{nil}) ? \emptyset : \{x,\&(\ast x)\rightarrow K \}\cup \verb"DMK"_m(\&(\ast x)\rightarrow l,y)\cup \verb"DMK"_m(\&(\ast x)\rightarrow r,y))$\\
\\

$\verb"HBSTK"_m(x,y):\textbf{P}(\textbf{P}(T))\times\textbf{P}(\textbf{P}(T))\rightarrow \textbf{SetOf}(\textbf{Ptr})$\\
\ \ \ \ $\triangleq$ \> $(x=y)$ \>$\,?\ \emptyset\,:$\\
                     \> $\{x\}\cup (\ast x=\textbf{nil})\,?\ \emptyset : \verb"HBSTK"_m(\&(\ast x)\rightarrow l,y)\cup \verb"HBSTK"_m(\&(\ast x)\rightarrow r,y)\cup $\\
                       \>     \>$\verb"DMK"_m(\&(\ast x)\rightarrow l)\cup (\textsf{DomK}(\&(\ast x)\rightarrow l)=\emptyset?\emptyset : \{\&(\ast x)\rightarrow K\}) \cup$\\
                       \>     \>$\verb"DMK"_m(\&(\ast x)\rightarrow r)\cup (\textsf{DomK}(\&(\ast x)\rightarrow r)=\emptyset?\emptyset : \{\&(\ast x)\rightarrow K\})$\\
\\
$\verb"MPPP"_m(x,y):\textbf{P}(\textbf{P}(T))\times\textbf{P}(\textbf{P}(T))\rightarrow \textbf{SetOf}(\textbf{Ptr})$\\
\ \ \ \ $\triangleq$ \> $(x=y)\,?\emptyset\,:$\\
                     \> $\{x\}\cup (\ast x=\textbf{nil})\,?\,\emptyset\,:$\\
                     \>         \> $\verb"MPPP"_m(\&(\ast x)\rightarrow l,y)\cup\verb"MPPP"_m(\&(\ast x)\rightarrow r,y)\cup\{\&(\ast x)\rightarrow K, \&(\ast x)\rightarrow D\}$\\
\\
$\verb"PNS"_m(x,y) : \textbf{P}(\textbf{P}(T))\times \textbf{P}(\textbf{P}(T)) \rightarrow \textbf{SetOf}(\textbf{Ptr})$\\
\ \ \ \ $\triangleq (x=y)? \emptyset : \{x\}\cup (\ast x = \textbf{nil})\,?\,\emptyset : (\verb"PNS"_m(\&(\ast x)\rightarrow l,y)\cup \verb"PNS"_m(\&(\ast x)\rightarrow r,y))$\\

\end{tabbing}
\caption{MSFs of the DRFs in Figure~\ref{NEW-FUNC}}\label{NEW-MSFs}

\end{figure}

\begin{figure}
\begin{equation}
\mathbb{P}', \textsf{isHBST}(\ast x), \textsf{isHBSTK}(x,y) \vdash y\not\in \verb"DMK"_m(x,y)\cup\verb"HBSTK"_m(x,y)\cup \verb"MPPP"_m(x,y)\cup\verb"PNS"_m(x,y)\label{YIsolate}
\end{equation}

\begin{equation}
\mathbb{P}', \textsf{isHBST}(\ast x), y\in \textsf{PNodeSet}(x,y), \ast y\neq \textbf{nil} \vdash \ast y\in \textsf{NodeSet}(\ast x)
\end{equation}

\begin{equation}
\begin{array}{l}
\mathbb{P}', \textsf{isHBST}(\ast x), y\in \textsf{PNodeSet}(x,y)\land \textsf{isHBSTK}(x, y)\land \textsf{k}< (\ast y)\rightarrow K  \vdash\\
\ \ \ \ \ \ \&(\ast y)\rightarrow l \in\textsf{PNodeSet}(x,\&(\ast y)\rightarrow l) \land \textsf{isHBSTK}(x, \&(\ast y)\rightarrow l)
\end{array}
\end{equation}

\begin{equation}
\begin{array}{l}
\mathbb{P}', \textsf{isHBST}(\ast x), y\in \textsf{PNodeSet}(x,y)\land \textsf{isHBSTK}(x, y)\land \textsf{k}> (\ast y)\rightarrow K  \vdash\\
\ \ \ \ \ \ \&(\ast y)\rightarrow r \in\textsf{PNodeSet}(x,\&(\ast y)\rightarrow r) \land \textsf{isHBSTK}(x, \&(\ast y)\rightarrow r)
\end{array}\label{GO-R}
\end{equation}

\begin{equation}
\begin{array}{l}
\mathbb{P}', \textsf{isHBSTK}(x,y), y\in \textsf{PNodeSet}(x,y), \texttt{inHeap}(\ast y), \\
 (\ast y)\rightarrow K = \textsf{k} \land (\ast y) \rightarrow l = \textbf{nil} \land (\ast y)\rightarrow r = \textbf{nil}\end{array}
\vdash \textsf{isHBST}(\ast x)\label{HBST-COMB}
\end{equation}

\begin{equation}
\mathbb{P}', \textsf{isHBST}(\ast x), y\in \textsf{PNodeSet}(x,y) \vdash \textsf{Map}(\ast x) = \textsf{MapPP}(x,y)\dag \textsf{Map}(\ast y)\label{MAPPP-COMB}
\end{equation}

\begin{equation}
\mathbb{P}', \textsf{isHBST}(\ast x) \vdash \&\textsf{p}\not\in \verb"DMK"_m(x,y)\cup\verb"HBSTK"_m(x,y)\cup \verb"MPPP"_m(x,y)\label{PIsolate}
\end{equation}

\begin{equation}
\mathbb{P}', \textsf{isHBST}(\ast x) \vdash \&\textsf{tmp}\not\in \verb"DMK"_m(x,y)\cup\verb"HBSTK"_m(x,y)\cup \verb"MPPP"_m(x,y)\label{TMPIsolate}
\end{equation}

\caption{Some properties about the DRFs and MSFs}\label{NEW-PROPER}
\end{figure}

We use $\textsf{PRE-COND}$ as the abbreviation for $\textsf{isHBST}(\textsf{rt})\land \textsf{k}\not\in \textsf{Dom} (\textsf{rt}) \land \textsf{Map}(\textsf{rt})=M_0$.
The specification of this program is
$$\textsf{PRE-COND}  \{\texttt{The Program}\} \textsf{isHBST}(\textsf{rt})\land \textsf{Map}(\textsf{rt}) = M_0\dag \{\textsf{k}\mapsto \textsf{d}\}$$
The sketch of the proof is as follows. The common premise of these assertions is $\mathbb{P}'$, which is omitted for conciseness.\\
\\
From the rule ASSIGN-ST, \ref{P-ISOLATE}, \ref{PIsolate}, and $\&\textsf{p}\not\in\{\&\textsf{rt}, \&\textsf{k}\}$, we get following two assertions:
\begin{equation}\begin{array}{l}
(\textsf{PRE-COND}\land \textsf{isHBSTK}(\&\textsf{rt}, x)\land x\in \textsf{PNodeSet}(\&\textsf{rt},x))[\&\textsf{rt}/x]\\
\ \ \ \ \ \ \ \ \{\textsf{p}=\&\textsf{rt};\}\\
(\textsf{PRE-COND}\land \textsf{isHBSTK}(\&\textsf{rt}, x)\land x\in \textsf{PNodeSet}(\&\textsf{rt},x))[\textsf{p}/x]
\end{array}\label{PROG2-ST1}
\end{equation}
\\
\begin{equation}
\begin{array}{l}
(\textsf{PRE-COND} \land (x\in \textsf{PNodeSet}(\&\textsf{rt},x)) \land \textsf{isHBSTK}(\&\textsf{rt},x))[\&(\ast \textsf{p})\rightarrow r/x]\\
\ \ \ \ \ \ \ \{\textsf{p}=\&(\ast \textsf{p})\rightarrow r\}\\
(\textsf{PRE-COND} \land (x\in \textsf{PNodeSet}(\&\textsf{rt},x)) \land \textsf{isHBSTK}(\&\textsf{rt},x))[\textsf{p}/x]
\end{array}\label{GO-RIGHT}
\end{equation}
\\
From the rule CONSEQUENCE, \ref{GO-RIGHT}, and \ref{GO-R}:
\begin{equation}\begin{array}{l}
\textsf{PRE-COND} \land (\textsf{p}\in \textsf{PNodeSet}(\&\textsf{rt},\textsf{p}))\land \textsf{isHBSTK}(\&\textsf{rt},\textsf{p})\land
\ast\textsf{p}\neq\textbf{nil}\land \\(\textsf{k} > \ast y\rightarrow K)\\
\ \ \ \ \ \ \ \{\textsf{p}=\&(\ast \textsf{p})\rightarrow r\}\\
\textsf{PRE-COND} \land (\textsf{p}\in \textsf{PNodeSet}(\&\textsf{rt},\textsf{p})) \land \textsf{isHBSTK}(\&\textsf{rt},\textsf{p})
\end{array}\label{ELSE-BRANCH}
\end{equation}
\\
Similarly to the way we get \ref{ELSE-BRANCH}, we have:
\begin{equation}
\begin{array}{l}
\textsf{PRE-COND} \land (\textsf{p}\in \textsf{PNodeSet}(\&\textsf{rt},\textsf{p})) \land \textsf{isHBSTK}(\&\textsf{rt},\textsf{p}) \land \ast\textsf{p}\neq\textbf{nil} \land\\
(\textsf{k} < \ast y\rightarrow K)\\
\ \ \ \ \ \ \ \{\textsf{p}=\&(\ast \textsf{p})\rightarrow l\}\\
\textsf{PRE-COND} \land (\textsf{p}\in \textsf{PNodeSet}(\&\textsf{rt},\textsf{p})) \land \textsf{isHBSTK}(\&\textsf{rt},\textsf{p})
\end{array}\label{THEN-BRANCH}
\end{equation}
\\
As $\ast \textsf{p}\neq \textbf{nil}$ implies $\textsf{k}<\textsf{p}\rightarrow K\lor \textsf{k}\ge\textsf{p}\rightarrow K$. From the rule IF-ST, \ref{ELSE-BRANCH} and \ref{THEN-BRANCH}:
\begin{equation}
\begin{array}{l}
\textsf{PRE-COND} \land (\textsf{p}\in \textsf{PNodeSet}(\&\textsf{rt},\textsf{p})) \land \textsf{isHBSTK}(\&\textsf{rt},\textsf{p}) \land \ast\textsf{p}\neq\textbf{nil}\\
\ \ \ \ \ \ \ \{\texttt{if } (\textsf{k} < \textsf{p}\rightarrow K ) \textsf{ p} := \&(\ast \textsf{p})\rightarrow l \texttt{ else } \textsf{p} := \&(\ast \textsf{p})\rightarrow r;\}\\
\textsf{PRE-COND} \land (\textsf{p}\in \textsf{PNodeSet}(\&\textsf{rt},\textsf{p})) \land \textsf{isHBSTK}(\&\textsf{rt},\textsf{p})
\end{array}\label{IFST}
\end{equation}
\\
$\textsf{p}\in \textsf{PNodeSet}(\&\textsf{rt},\textsf{p}))$ implies $\textsf{p}\neq \textbf{nil}$, thus $\ast \textsf{p}=\textbf{nil} \lor \ast \textsf{p}\neq \textbf{nil}$, From the rule WHILE-ST, \ref{IFST}:
\begin{equation}
\begin{array}{l}
\textsf{PRE-COND} \land (\textsf{p}\in \textsf{PNodeSet}(\&\textsf{rt},\textsf{p})) \land \textsf{isHBSTK}(\&\textsf{rt},\textsf{p})\\
\ \ \ \ \ \ \ \{\texttt{the while statement} \}\\
\textsf{PRE-COND} \land (\textsf{p}\in \textsf{PNodeSet}(\&\textsf{rt},\textsf{p})) \land \textsf{isHBSTK}(\&\textsf{rt},\textsf{p})\land \ast \textsf{p}=\textbf{nil}
\end{array}
\end{equation}
\\
From the rule CONSEQUENCE, \ref{MAPPP-COMB}, $\ast\textsf{p}=\textbf{nil}$, and $\textsf{Map}(\textbf{nil})=\emptyset$, we have
\begin{equation}
\begin{array}{l}
\textsf{PRE-COND} \land (\textsf{p}\in \textsf{PNodeSet}(\&\textsf{rt},\textsf{p})) \land \textsf{isHBSTK}(\&\textsf{rt},\textsf{p})\\
\ \ \ \ \ \ \ \{\texttt{the while statement} \}\\
\textsf{p}\in \textsf{PNodeSet}(\&\textsf{rt},\textsf{p}) \land \textsf{isHBSTK}(\&\textsf{rt},\textsf{p})\land \textsf{MapPP}(\&\textsf{rt}, \textsf{p})=M_0
\end{array}\label{PROG2-WHILE-ST}
\end{equation}
\\
From the rule ALLOC-ST and the fact that $\textsf{tmp}$ is not relevant to any terms, we have:
\begin{equation}\begin{array}{l}
\textsf{p}\in \textsf{PNodeSet}(\&\textsf{rt},\textsf{p}) \land \textsf{isHBSTK}(\&\textsf{rt},\textsf{p})\land \textsf{MapPP}(\&\textsf{rt}, \textsf{p})=M_0\\
\ \ \ \ \ \ \ \ \{\textsf{tmp} = \texttt{alloc}(T);\}\\
\textsf{p}\in \textsf{PNodeSet}(\&\textsf{rt},\textsf{p}) \land \textsf{isHBSTK}(\&\textsf{rt},\textsf{p})\land \textsf{MapPP}(\&\textsf{rt}, \textsf{p})=M_0\land\\
\textsf{tmp}\neq \textbf{nil}\land \texttt{InHeap}(\textsf{tmp})\land \texttt{Unique}(\&\textsf{tmp}) \land \texttt{PtrInit}(\textsf{tmp})
\end{array}\label{PROG2-ALLOC-ST}\end{equation}
\\
\begin{equation}\begin{array}{l}
\textsf{p}\in \textsf{PNodeSet}(\&\textsf{rt},\textsf{p}) \land \textsf{isHBSTK}(\&\textsf{rt},\textsf{p})\land \textsf{MapPP}(\&\textsf{rt}, \textsf{p})=M_0\land\\
\textsf{tmp}\neq \textbf{nil}\land \texttt{InHeap}(\textsf{tmp})\land \texttt{Unique}(\&\textsf{tmp}) \land \texttt{PtrInit}(\textsf{tmp})\\
\ \ \ \ \ \ \ \ \{\textsf{tmp}\rightarrow K := \textsf{k}; \textsf{tmp}\rightarrow D:= \textsf{d};\}\\
\textsf{p}\in \textsf{PNodeSet}(\&\textsf{rt},\textsf{p}) \land \textsf{isHBSTK}(\&\textsf{rt},\textsf{p})\land \textsf{MapPP}(\&\textsf{rt}, \textsf{p})=M_0\land \\
\textsf{isHBST}(\textsf{tmp})\land \textsf{Map}(\textsf{tmp})=\{\textsf{k}\mapsto\textsf{d}\}\\
\end{array}\label{PROG2-FIELD-ASSIGN}
\end{equation}
\\
From the rule ASSIGN-ST, \ref{YIsolate}, and $\textsf{p}\not\in\{\&\textsf{p}\}$,  we have:
\begin{equation}\begin{array}{l}
\textsf{p}\in \textsf{PNodeSet}(\&\textsf{rt},\textsf{p}) \land \textsf{isHBSTK}(\&\textsf{rt},\textsf{p})\land \textsf{MapPP}(\&\textsf{rt}, \textsf{p})=M_0\land \\
\textsf{isHBST}(\textsf{tmp})\land \textsf{Map}(\textsf{tmp})=\{\textsf{k}\mapsto\textsf{d}\}\\
\ \ \ \ \ \ \ \ \{\ast \textsf{p}:=\textsf{tmp};\}\\
\textsf{p}\in \textsf{PNodeSet}(\&\textsf{rt},\textsf{p}) \land \textsf{isHBSTK}(\&\textsf{rt},\textsf{p})\land \textsf{MapPP}(\&\textsf{rt}, \textsf{p})=M_0\land \\
\textsf{isHBST}(\ast\textsf{p})\land \textsf{Map}(\ast\textsf{p})=\{\textsf{k}\mapsto\textsf{d}\}\\
\end{array}
\end{equation}
\\
From the rule CONSEQUENCE, \ref{HBST-COMB}, and \ref{MAPPP-COMB}, we have:
\begin{equation}\begin{array}{l}
\textsf{p}\in \textsf{PNodeSet}(\&\textsf{rt},\textsf{p}) \land \textsf{isHBSTK}(\&\textsf{rt},\textsf{p})\land \textsf{MapPP}(\&\textsf{rt}, \textsf{p})=M_0\land \\
\textsf{isHBST}(\textsf{tmp})\land \textsf{Map}(\textsf{tmp})=\{\textsf{k}\mapsto\textsf{d}\}\\
\ \ \ \ \ \ \ \ \{\ast \textsf{p}:=\textsf{tmp};\}\\
\textsf{isHBST}(\textsf{rt})\land \textsf{Map}(\textsf{rt})= M_0\dag \{\textsf{k}\mapsto\textsf{d}\}
\end{array}\label{PROG2-ST5}\end{equation}
\\
From the rule SEQ-ST, and \ref{PROG2-ST1}, \ref{PROG2-WHILE-ST}, \ref{PROG2-ALLOC-ST}, \ref{PROG2-FIELD-ASSIGN}, \ref{PROG2-ST5}, we prove the specification.
\begin{equation}
\textsf{PRE-COND}  \{\texttt{The Program}\} \textsf{isHBST}(\textsf{rt})\land \textsf{Map}(\textsf{rt}) = M\dag \{\textsf{k}\mapsto \textsf{d}\}
\end{equation}

\section{Verifying programs abstractly: the simplified Schorr-Waite algorithm}\label{APP-ABS}
The Schorr-Waite algorithm marks all nodes of a directed graph that are reachable form one given node. The program depicted in Figure~\ref{SCHORR-WAITE} is rewrite from a simplified version presented by David Gries\cite{SCHORRWAITE}.
The variables
$\textsf{tmp}, \textsf{p}, \textsf{q}, \textsf{root}, \textsf{vroot}$ are declared with type $\textbf{P}(T)$, and $T = \textbf{REC}((m,\textbf{integer})\times(l,\textbf{P}(T))\times(r,\textbf{P}(T))$. In this program, it is simplified
that each node has exactly two non-nil pointers (i.e. the field $l$ and $r$). We use this program to show how to verify a program in an abstract level.
The verification presented here is just a sketch, many details are omitted.

\begin{figure}
\begin{center}
\parbox{300pt}{
\begin{tabbing}
\textsf{p}=\textsf{root}; \textsf{q}=\textsf{vroot}; /*$\textsf{vroot}\rightarrow l = \textsf{vroot}\rightarrow r = \textsf{root}$*/\\
\textbf{while}\= ($\textsf{p}\neq \textsf{vroot}$)\=\\
\{\\
              \>$\textsf{p}\rightarrow m = \textsf{p} \rightarrow m + 1;$\\
              \>\textbf{if} ($\textsf{p}\rightarrow m = 3\textbf{ or }(\&\textsf{p}\rightarrow l)\rightarrow m = 0$)\\
              \>\{\\
              \>   \>$\textsf{tmp} := \textsf{p}; \textsf{p}:=\textsf{p}\rightarrow l;$\\
              \>   \>$\textsf{p}\rightarrow l=\textsf{p}\rightarrow r; \textsf{p}\rightarrow r := \textsf{q}; \textsf{q}=\textsf{tmp};$\\
              \>\}\\
              \>\textbf{else}\\
              \>\{\\
              \>   \>$\textsf{tmp} := \textsf{p}\rightarrow l;\textsf{p}\rightarrow l:= \textsf{p}\rightarrow r;$\\
              \>   \>$\textsf{p}\rightarrow r:= \textsf{q}; \textsf{q} := \textsf{tmp}$\\
              \>\}\\
\}
\end{tabbing}
}
\end{center}
\caption{The simplified Schorr-Waite algorithm}\label{SCHORR-WAITE}
\end{figure}

The DRFs used in (partial) specification and verification of this algorithm are depicted in Figure~\ref{SCHORR-FUN}. Intuitively speaking,
the DRF $\textsf{StackPath}(\textsf{p})$ retrieve the path from the virtual root $\textsf{vroot}$ to the current node $\textsf{p}$.
$\textsf{Pred}(x)$ is used to compute the predecessor of a node in the path. $\textsf{AcyclicSeq(x)}$ is used to assert that the path retrieved by
$\textsf{StackPath}(\textsf{p})$ is acyclic.

\begin{figure}
\begin{center}
\parbox{250pt}{
$\textsf{StackPath}(x):\textbf{P}(T)\rightarrow \textbf{SeqOf}(\textbf{P}(T))$\\
\mbox{}\ \ \ \ $\triangleq (x=\textsf{vroot})\,?\, [\textsf{vroot}] : [x]^\frown \textsf{StackPath}(\textsf{Pred}(x)))$\\

$\textsf{Pred}(x):\textbf{P}(T)\rightarrow \textbf{P}(T)$\\
\mbox{}\ \ \ \ $\triangleq (x\rightarrow m = 0)\,?\,\textsf{q} : ((x\rightarrow m = 1)\,?\,x\rightarrow r : x\rightarrow l)$\\

$\textsf{AcyclicSeq}(x):\textbf{SeqOf}(\textbf{P}(T))$\\
\mbox{}\ \ \ \ $\triangleq \textbf{head}(x)\not\in \textbf{tail}(x) \land \textsf{AcyclicSeq}(\textbf{tail}(x))$\\
\caption{The functions defined to prove Schorr-Waite algorithm}\label{SCHORR-FUN}
}
\end{center}
\end{figure}

Let $G$ be the node set of the graph; $L(\textsf{p})$ for original value of $\textsf{p}\rightarrow l$; $R(\textsf{p})$ for original value of $\textsf{p}\rightarrow r$; $\textsf{SUCC}(x)\triangleq (x\rightarrow m = 1)\,?\, R(x) : L(x)$. From \cite{SCHORRWAITE}, the following invariant of the while statement holds.
\begin{equation}
\begin{array}{l}
\forall x\in G\cdot (\ \ (x\rightarrow m = 0 \land x\rightarrow l = L(x)\land x\rightarrow r = R(x)) \lor\\
\mbox{\ \ \ \ \ \ \ \ \ \ \ \ \ \ \ }(x\rightarrow m = 1 \land x\rightarrow l = R(x)\land \textsf{SUCC}(x\rightarrow r) = x) \lor\\
\mbox{\ \ \ \ \ \ \ \ \ \ \ \ \ \ \ }(x\rightarrow m = 2 \land \textsf{SUCC}(x\rightarrow l) = x \land x\rightarrow r = L(x)) \lor\\
\mbox{\ \ \ \ \ \ \ \ \ \ \ \ \ \ \ }(x\rightarrow m = 3 \land x\rightarrow l = L(x) \land x\rightarrow r = R(x))\ \  )\\
\bigwedge (\ \ \ \\
\mbox{\ \ \ \ \ }(\textsf{p}\rightarrow m = 0 \land (L(\textsf{q})=\textsf{p}\lor R(\textsf{q})=\textsf{p})) \lor\\
\mbox{\ \ \ \ \ }(\textsf{p}\rightarrow m = 1 \land \textsf{q}=L(\textsf{p})) \lor (\textsf{p}\rightarrow m = 2 \land \textsf{q}=R(\textsf{p}))\ \ )\\
\bigwedge \textsf{AcyclicSeq}(\textsf{StackPath}(\textsf{p}))\land \textsf{p}=\texttt{head}(\textsf{StackPath}(\textsf{p}))
\end{array}\label{SCHOR-INV}
\end{equation}

We write this invariant as \textbf{INV}. The following specifications of the body of the while statement can be proved. In these specifications, $\stackrel\leftarrow p$ and $\stackrel\leftarrow S$ are constants used to denote the original value of $\textsf{p}$ and the path.
\begin{equation}
\begin{array}{l}
\textbf{INV}\land \textsf{p}\rightarrow m = 0 \land L(\textsf{p})\rightarrow m= 0 \land \textsf{StackPath}(\textsf{p})=\stackrel\leftarrow S\land \textsf{p}=\stackrel\leftarrow p\\
\mbox{}\ \ \ \ \mbox{\{The body of the while statement\}}\\
\textbf{INV} \land \stackrel\leftarrow p\rightarrow m = 1\land  \textsf{StackPath}(\textsf{p})= L(\stackrel\leftarrow p)\,^\frown \stackrel\leftarrow S
\end{array}
\end{equation}

\begin{equation}
\begin{array}{l}
\textbf{INV}\land \textsf{p}\rightarrow m = 1 \land R(\textsf{p})\rightarrow m= 0 \land \textsf{StackPath}(\textsf{p})=\stackrel\leftarrow S\land \textsf{p}=\stackrel\leftarrow p\\
\mbox{}\ \ \ \ \mbox{\{The body of the while statement\}}\\
\textbf{INV}\land \stackrel\leftarrow p\rightarrow m = 2\land  \textsf{StackPath}(\textsf{p})= R(\stackrel\leftarrow p)\,^\frown \stackrel\leftarrow S\\
\end{array}\end{equation}

\begin{equation}
\begin{array}{l}
\textbf{INV}\land \textsf{p}\rightarrow m = 2 \land \textsf{StackPath}(\textsf{p})=\textsf{p}^{\ \frown}\stackrel\leftarrow S\\
\mbox{}\ \ \ \mbox{ \{The body of the while statement\}}\\
\textbf{INV}\land \textsf{p}=R(\stackrel\leftarrow p)\land \stackrel\leftarrow p\rightarrow m = 3\land  \textsf{StackPath}(\textsf{p})=\stackrel\leftarrow S\\
\end{array}
\end{equation}
If we view $\textsf{StackPath}(\textsf{p})$ as a virtual variable, it can be seen that the body of the while statement have different pragmatic meanings when the value of $\textsf{p}\rightarrow m$ equals to $0,1,2$. Based on these properties, we can view the abstract program depicted in Figure~\ref{ABS-SCHORR-WAITE} as an abstract version of the program in Figure~\ref{SCHORR-WAITE}. From this abstract level, it is clear that the program in Figure~\ref{SCHORR-WAITE} is in fact an efficient and elaborative implementation of the depth-first-search algorithm. We can continue proving the algorithm based on this abstract program. Though assignment statements to abstract variables are not allowed in the code, the abstract program can help us thinking.
\begin{figure}

\begin{center}
\parbox{300pt}{
\begin{tabbing}
\textsf{p}:=\textsf{root}; \textsf{S}=$\emptyset$; $\textsf{push}(\textsf{vroot},\textsf{S})$; $\textsf{push}(\textsf{p},\textsf{S})$;\\
\textbf{while} \= ($\textsf{p}\neq \textsf{vroot}$)\= \textbf\ \ \textbf{do}\ \{\\
\>$\textsf{p}\rightarrow m = \textsf{p}\rightarrow m + 1$;\\
\>\textbf{if} ($\textsf{p}\rightarrow m  = 1 \land L(\textsf{p})\rightarrow m= 0$ )\ \ \ \ \ \{$\textsf{push}(L(\textsf{p}),\textsf{S});$ \}\\
\>\textbf{else}\ \=\textbf{if}($\textsf{p}\rightarrow m  = 2 \land R(\textsf{p})\rightarrow m= 0$ )\ \ \ \ \ \{$\textsf{push}(R(\textsf{p}),\textsf{S});$ \}\\
\>               \>\textbf{else}\ \=\textbf{if}($\textsf{p}\rightarrow m = 3$)\ \ \ \ \ \{$\textsf{pop}(\textsf{S});$\}\\
\>              \>              \>\textbf{else} \texttt{skip}\\
\>\textsf{p} = \textsf{top}(\textsf{S})\\
\>  \}
\end{tabbing}
}
\end{center}
\caption{The abstract version of the simplified Schorr-Waite algorithm}\label{ABS-SCHORR-WAITE}
\end{figure}

\end{document}